\newtheorem{thm}{Theorem}
\newtheorem{lemma}{Lemma}
\newtheorem{mydef}{Definition}
\newtheorem{appxlem}{Lemma}[section]
\newtheorem{appxdef}{Definition}[section]
\newcommand{\argmax}{\operatornamewithlimits{argmax}}
\newcommand{\mb}{\mathbf}
\title{Birth/birth-death processes and their computable transition probabilities with biological applications}
\date{}
\author{
Lam Si Tung Ho \\
Department of Biostatistics \\ 
University of California, Los Angeles \\
\and
Jason Xu \\
Department of Biomathematics \\
University of California, Los Angeles \\
\and
Forrest W.~Crawford \\
Department of Biostatistics \\
Yale University \\
\and
Vladimir N.~Minin \\
Departments of Statistics and Biology \\
University of Washington \\
\and
Marc A.~Suchard \\
Departments of Biomathematics, Biostatistics and Human Genetics \\
University of California, Los Angeles
}
\begin{document}
\captionsetup[subfigure]{labelformat=empty}

\maketitle

\clearpage

\begin{abstract}
Birth-death processes track the size of a univariate population, but many biological systems involve interaction between populations, necessitating models for two or more populations simultaneously. 
A lack of efficient methods for evaluating finite-time transition probabilities of bivariate processes, however, has restricted statistical inference in these models.
Researchers rely on computationally expensive methods such as matrix exponentiation or Monte Carlo approximation, restricting likelihood-based inference to small systems, or indirect methods such as approximate Bayesian computation. 
In this paper, we introduce the \textit{birth/birth-death process}, a tractable bivariate extension of the birth-death process, where rates are allowed to be nonlinear.
We develop an efficient algorithm to calculate its transition probabilities 
using a 
continued fraction representation of their Laplace transforms.
Next, we identify several exemplary models arising in molecular epidemiology, macro-parasite evolution, and infectious disease modeling that fall within this class, and demonstrate advantages of our proposed method over existing approaches to inference in these models.
Notably, the ubiquitous stochastic susceptible-infectious-removed (SIR) model falls within this class, and we emphasize that computable transition probabilities newly enable direct inference of parameters in the SIR model.
We also propose a very fast method for approximating the transition probabilities under the SIR model via a novel branching process simplification, and compare it to the continued fraction representation method with application to the 17th century plague in Eyam.
Although the two methods produce similar maximum \textit{a posteriori} estimates, the branching process approximation fails to capture the correlation structure in the joint posterior distribution.


\paragraph{Keywords} stochastic models, birth-death process, infectious disease, SIR model, transition probabilities
\end{abstract}

\clearpage

\section{Introduction}
Birth-death processes have been used extensively in many applications including evolutionary biology, ecology, population genetics, epidemiology, and queuing theory \citep[see e.g.][]{novozhilov2006, crawford2012, doss2013, rabier2014detecting, crawford2015sex}. However, establishing analytic and computationally practical formulae for their transition probabilities is usually difficult \citep{novozhilov2006}. 
The state-of-the-art method for computing the transition probabilities of birth-death processes proposed in \citet{crawford2012} enables statistical estimation 
for general birth-death processes 
using likelihood-based inference \citep{crawford2014JASA}. Unfortunately, birth-death processes inherently only track one population, and extending 
this technique
beyond the univariate case is nontrivial. Many applied models require the consideration of two or more interacting populations simultaneously to model behavior such as competition, predation, or infection. Examples of such bivariate models include epidemic models \citep{mckendrick1926applications,Kermack1927, griffiths1972}, predator-prey models \citep{hitchcock1986, owen2014}, genetic models \citep{rosenberg2003, Xu2015}, and within-host macro-parasite models \citep{drovandi2011}. 

The most general extensions of birth-death processes to bivariate processes are competition processes \citep{reuter1961}. These processes allow not only ``birth" and ``death" events in each population, but also ``transition" events where an individual moves from one population to the other. Unlike birth-death processes, few attempts have been made to compute the transition probabilities of competition processes or their special cases. Hence, researchers usually rely on classical continuous-time Markov chain methods such as matrix exponentiation and diffusion approximation. Unfortunately, these methods fail to leverage the specific structure of competition processes, and have several intrinsic limitations.  Matrix exponentiation methods compute the transition probability matrix ${\mathbf P}(t)$ by solving the matrix form of Kolmogorov's forward equation ${\mathbf P}'(t) = {\mathbf P}(t) {\mathbf Q}$ with initial condition ${\mathbf P}(0) = {\mathbf I}$, where ${\mathbf Q}$ is the instantaneous rate matrix of the process. While this equation admits a unique solution ${\mathbf P}(t) = \exp({\mathbf Q}t)$ \citep{ephraim2012bivariate}, numerical evaluation of the matrix exponential is often troublesome \citep{moler2003nineteen}.
Its computational cost via eigenvalue decomposition, for instance, is cubic in the size of the state-space and thus becomes computationally prohibitive even with moderately sized state-spaces \citep{drovandi2011, crawford2012}. For example, \citet{Keeling2008} demonstrate that computing transition probabilities via matrix exponentiation for the simplest epidemic models is practical only when modeling spread of an infectious disease through a very small population (e.g., 100 people). Moreover, matrix exponentiation can introduce serious rounding errors for certain rate matrices even for biologically reasonable values \citep{schranz2008pathological, crawford2012, crawford2014JASA}. Diffusion approximations, on the other hand, require the state-space to be large in order to justify approximating a discrete process by a continuous-valued diffusion process \citep{karev2005modeling, golightly2005bayesian}, and can often remain inaccurate for simulation even in settings with large state-spaces \citep{golightly2005bayesian}. Branching processes form another closely related class of processes, and have been used in a likelihood-based framework to study bivariate populations \citep{Xu2015}. Branching processes are at once more general than competition processes, permitting events that increment populations by more than one, and also more restrictive in that linearity is implied by an assumption that particles act independently. The latter assumption is limiting in epidemiological applications, for instance, which commonly feature non-linear interactions between populations.

The lack of a reliable method for computing transition probabilities in bivariate processes forces researchers to apply alternative likelihood-free approaches such as approximate Bayesian computation (ABC) \citep{blum2010hiv, drovandi2011, owen2014}. The ABC approach uses simulated and observed summary statistics to bypass likelihood evaluation. Nonetheless, this is not a panacea approach that can completely replace traditional likelihood-based methods. The ABC method itself has several sources for 
loss of information
such as non-zero tolerance, and non-sufficient summary statistics 
\citep{sunnaaker2013approximate}. 
The tolerance is an ad hoc threshold to decide whether ABC accepts a new proposal. 
If the tolerance is zero and the summary statistics are sufficient, ABC is guaranteed to return the correct posterior distribution.
In practice, however, tolerance is always positive which often leads to bias.
In the context of counting processes, sufficient summary statistics usually do not exist because the data are observed partially. 
Thus, credible interval estimates under ABC are potentially inflated due to the loss of information \citep{csillery2010approximate}. 
Also, when sufficient summary statistics are not available, the ABC method can not be trusted in selecting between models \citep{robert2011lack}. 
Because of all these limitations, direct likelihood-based methods are often more favorable. 

In this paper, we develop an efficient method to compute the transition probabilities of a subclass of competition processes with two interacting populations of particles, enabling likelihood-based inference. We call this subclass birth(death)/birth-death processes, whose first population is increasing (decreasing). 
It is worth mentioning that we do not impose linearity condition for the rates of these processes.
A rigorous characterization of this class of processes and derivation of recursive formulae to compute their transition probabilities are provided in Section \ref{sec:bbd}. Our main tools are the Laplace transform and continued fractions that have been successfully applied for univariate birth-death processes in \citet{crawford2012}. These formulae enable accurate and computationally efficient numerical computation of transition probabilities. We implement this method in the new \texttt{R} package \texttt{MultiBD} \url{https://github.com/msuchard/MultiBD}. In Section \ref{sec:app}, we discuss multiple scientifically relevant applications of birth(death)/birth-death processes including stochastic susceptible-infectious-removed (SIR) models in epidemiology \citep{mckendrick1926applications,Kermack1927, raggett1982}, monomolecular reaction systems \citep{jahnke2007solving}, a birth-death-shift model for transposable elements \citep{rosenberg2003, Xu2015}, and a within-host macro-parasite model \citep{riley2003, drovandi2011}. We examine the accuracy of our method in simulation studies, including comparisons to branching process, matrix exponentiation method, and Monte Carlo approximations. Finally, we apply our method to estimate infection rates and death rates during the plague of Eyam in 1666 within a likelihood-based Bayesian framework in Section \ref{sec:eyam}.

\paragraph{Previous work on computing the transition probabilities:}
Analytic expressions of the transition probabilities have only been found for some special cases such as linear birth-death processes \citep[see e.g.][]{novozhilov2006} and monomolecular reaction systems \citep{jahnke2007solving}.
Therefore, matrix exponentiation is still the most common method for computing the transition probabilities of general Markov processes.
The state-of-the-art software package for exponentiating sparse matrices is Expokit \citep{sidje1998,moler2003nineteen}, which uses Krylov subspace projection method.
\citet{van2006preconditioning} propose a modified version using a simple preconditioned transformation to improve the convergence behavior of this method.
Although matrix exponentiation has the advantage of generality in that it can be applied to any Markov process, it is not the most efficient method in many scenarios. 
Recently, \citet{crawford2012} propose an efficient method for evaluating the transition probabilities of general birth-death processes using Laplace transform and continued fraction.
However, efficient methods that extend this result to general bivariate birth-death processes have yet to be found.


\section{Birth(death)/birth-death processes}
\label{sec:bbd}


\subsection{Birth/birth-death processes}
A birth/birth-death process is a bivariate continuous-time Markov process ${\bf X}(t) = (X_1(t), X_2(t))$, $t \geq 0$, whose state-space is in $\mathbb{N} \times \mathbb{N}$, the Cartesian product of the non-negative integers. We can describe a birth/birth-death process as governing dynamics of a system consisting two types of particles, where one out of four possible events can happen in infinitesimal time: (1) a new {\bf type 1} particle enters the system; (2) a new {\bf type 2} particle enters the system; (3) a  {\bf type 2} particle leaves the system; or (4) a {\bf type 2} particle becomes a {\bf type 1} particle. In this system, $X_1(t)$ and $X_2(t)$ track the number of {\bf type 1}  and {\bf type 2} particles at time $t$ respectively. Mathematically, there are five possibilities for ${\bf X}(t)$ during a small time interval $(t,t+dt)$:
\begin{align}
& \Pr \left \{ \begin{array}{l | l} X_1(t+dt) = a+1 & X_1(t)= a \\ X_2(t+dt) = b & X_2(t) = b \end{array} \right \} = \lambda^{(1)}_{ab} dt + o(dt) \nonumber \\
& \Pr \left \{ \begin{array}{l | l} X_1(t+dt) = a & X_1(t)= a \\ X_2(t+dt) = b+1 & X_2(t) = b \end{array} \right \} = \lambda^{(2)}_{ab} dt + o(dt) \nonumber \\
& \Pr \left \{ \begin{array}{l | l} X_1(t+dt) = a & X_1(t)= a \\ X_2(t+dt) = b-1 & X_2(t) = b \end{array} \right \} = \mu^{(2)}_{ab} dt + o(dt) \nonumber \\
& \Pr \left \{ \begin{array}{l | l} X_1(t+dt) = a+1 & X_1(t)= a \\ X_2(t+dt) = b-1 & X_2(t) = b \end{array} \right \} = \gamma_{ab}dt + o(dt) \nonumber \\
& \Pr \left \{ \begin{array}{l | l} X_1(t+dt) = a~~~~~ & X_1(t)= a \\ X_2(t+dt) = b & X_2(t) = b \end{array} \right \} = 1 - (\lambda^{(1)}_{ab} +\lambda^{(2)}_{ab} + \mu^{(2)}_{ab} + \gamma_{ab})dt + o(dt),
\label{eqn:bbd}
\end{align}
where $a,b \in \mathbb{N}$, $\lambda^{(1)}_{ab} \geq 0$ is the birth rate of {\bf type 1} particles given $a$ {\bf type 1} particles and $b$ {\bf type 2} particles, $\lambda^{(2)}_{ab} \geq 0$ is the equivalent birth rate of {\bf type 2} particles, $\mu^{(2)}_{ab} \geq 0$ is the death rate of {\bf type 2} particles, and $\gamma_{ab}$ is the transition rate from {\bf type 2} particles to {\bf type 1} particles. 
We fix $\lambda^{(1)}_{-1,b} = \lambda^{(2)}_{a,-1} = \mu^{(2)}_{a0} = \gamma_{-1,b} = \gamma_{a0} = 0$.

Letting $P^{a_0b_0}_{ab}(t) = \Pr \{{\bf X}(t)= (a,b)~ |~{\bf X}(0)=(a_0,b_0)\}$, 
the forward Kolmogorov's equations for the birth/birth-death process are
\begin{align}
\frac{dP^{a_0b_0}_{ab}(t)}{dt} &= \lambda^{(1)}_{a-1,b} P^{a_0b_0}_{a-1,b}(t) +  \lambda^{(2)}_{a,b-1} P^{a_0b_0}_{a,b-1}(t) +  \mu^{(2)}_{a,b+1}P^{a_0b_0}_{a,b+1}(t) \nonumber \\ 
&+ \gamma_{a-1,b+1}P^{a_0b_0}_{a-1,b+1}(t) - (\lambda^{(1)}_{ab} + \lambda^{(2)}_{ab} + \mu^{(2)}_{ab} + \gamma_{ab})P^{a_0b_0}_{ab}(t),
\label{eqn:trans_eq}
\end{align}
for all $(a,b)$.
In practice, we can usually only observe the process discretely.
In this scenario, the likelihood function is the product of transition probabilities between consecutive observations.
Therefore, computing $P^{a_0b_0}_{ab}(t)$ is an important step for any direct likelihood-based analysis.

In general, a birth/birth-death process is a special case of a competition process \citep{reuter1961} with rate matrix ${\bf Q} = \{ q_{ij} \}$ where $i,j \in \mathbb{N} \times \mathbb{N}$ and
\begin{center}
  \begin{tabular}{  c|c|c }
    $j$ & Competition process & Birth/birth-death\\	
    \hline
    $(a+1,b)$ & $q_{(a,b)(a+1,b)~~}$ & $\lambda^{(1)}_{ab}$\\
    $(a-1,b)$ & $q_{(a,b)(a-1,b)}~~$ & $0$\\
    $(a,b+1)$ & $q_{(a,b)(a,b+1)}~~$ & $\lambda^{(2)}_{ab}$\\
    $(a,b-1)$ & $q_{(a,b)(a,b-1)}~~$ & $\mu^{(2)}_{ab}$\\
    $(a+1,b-1)$ & $q_{(a,b)(a+1,b-1)}$ & $\gamma_{ab}$\\
    $(a-1,b+1)$ & $q_{(a,b)(a-1,b+1)}$ & $0$\\
    $(a,b)$ & $\displaystyle - \sum_{k, l \in \{-1, 0, 1\}}^{k \ne l}{q_{(a,b)(a+k,b+l)}}$ & $- (\lambda^{(1)}_{ab} + \lambda^{(2)}_{ab} + \mu^{(2)}_{ab} + \gamma_{ab})$\\
    other & $0$ & $0$
  \end{tabular}
\end{center}
for $i = (a,b)$. Competition processes are the most general bivariate Markov processes that only allow transitions between neighboring states. Many practical models in biology are special cases of these processes such as epidemic models \citep{mckendrick1926applications,Kermack1927, griffiths1972} and predator-prey models \citep{hitchcock1986, owen2014}.


\subsubsection{Sufficient condition for regularity}
\label{sec:bbd_regularity}

\begin{mydef}
A birth/birth-death process is regular if there is a unique set of transition probabilities $P^{a_0b_0}_{ab}(t)$ satisfying the system of equations (\ref{eqn:trans_eq}). 
\end{mydef}
Here, we establish the sufficient condition for regularity of a birth/birth-death process. For $k \in \mathbb N$, we denote:
 \begin{align}
 & D_k = \{ (a,b): a + b = k \} \in \mathbb{N} \times \mathbb{N},~\text{and} \nonumber \\
 & \lambda_k = \max_{(a,b) \in D_k} \{ \lambda^{(1)}_{ab} + \lambda^{(2)}_{ab} \}.
 \end{align}
 \begin{thm}
 The sufficient condition for regularity of a general birth/birth-death process is $ \sum_{k=1}^\infty{1/\lambda_k} = \infty$.
 \label{thm:reg_bbd}
 \end{thm}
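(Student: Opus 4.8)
The plan is to identify regularity with non-explosion of the minimal process and then to control explosion through the one-dimensional total-size process $S(t) = X_1(t) + X_2(t)$. For a conservative, stable $q$-matrix such as the one tabulated above, uniqueness of the solution to the Kolmogorov equations (\ref{eqn:trans_eq}) is equivalent to non-explosion of the associated minimal jump process, i.e.\ to the process making only finitely many jumps in every finite time interval almost surely \citep{reuter1961}. It therefore suffices to prove non-explosion under the hypothesis $\sum_{k} 1/\lambda_k = \infty$.

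First I would record the crucial structural observation that, among the four event types, only a type 1 or type 2 birth increases $S$ (each by one), a type 2 death decreases $S$ by one, and a transition event leaves $S$ unchanged. Consequently, if $S$ stays bounded by some finite $M$ on a time interval, then the process is confined to the finite set $\{(a,b): a+b \le M\}$, on which the total exit rates $q_{(a,b)} = \lambda^{(1)}_{ab}+\lambda^{(2)}_{ab}+\mu^{(2)}_{ab}+\gamma_{ab}$ are uniformly bounded; a continuous-time chain with uniformly bounded exit rates cannot explode, since its holding times dominate an i.i.d.\ sequence of exponentials whose sum diverges. Hence explosion can occur only if $S(t)\to\infty$ before the explosion time $\tau$, and this in turn requires infinitely many birth events in finite time. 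This reduces the theorem to showing that $S$ itself does not blow up.

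The heart of the argument is a pathwise coupling of $S(t)$ with a pure birth process $Y(t)$ having birth rate $\lambda_k$ in state $k$ and $Y(0)=S(0)$, engineered so that $S(t)\le Y(t)$ for all $t$. The only configuration in which the inequality is in danger is $S=Y=k$ with $S$ attempting a birth; there the instantaneous birth rate of $S$ at its current state $(a,b)\in D_k$ equals $\lambda^{(1)}_{ab}+\lambda^{(2)}_{ab}\le \lambda_k$, exactly the rate at which $Y$ gives birth, so the two births can be driven by a shared clock while $Y$ carries an additional independent birth clock of rate $\lambda_k-(\lambda^{(1)}_{ab}+\lambda^{(2)}_{ab})\ge 0$. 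When $S<Y$, births of $S$ never threaten the inequality and run off an independent clock, whereas deaths and transitions of $S$ only decrease or preserve $S$ and never move $Y$. I expect this coupling to be the main obstacle, because $S$ is not itself Markov and the sequence $\lambda_k$ need not be monotone; the insight that rescues a monotone coupling is that the domination constraint binds only when $S=Y$, precisely where the level-indexed rate $\lambda_k$ dominates $S$'s birth rate by construction.

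Finally I would invoke the classical criterion that a pure birth process with rates $\lambda_k$ is non-explosive if and only if $\sum_k 1/\lambda_k=\infty$, since its explosion time is a sum of independent $\mathrm{Exp}(\lambda_k)$ holding times, which is finite almost surely exactly when $\sum_k 1/\lambda_k<\infty$; here each $\lambda_k$ is finite because $D_k$ is a finite set. Under the hypothesis $\sum_k 1/\lambda_k=\infty$ the dominating process $Y$ stays finite at all finite times, so on any event $\{\tau<\infty\}$ we would have $S(t)\le Y(t)\le Y(\tau^-)<\infty$ bounded on $[0,\tau)$, which by the reduction of the second paragraph forces $\tau=\infty$, a contradiction. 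Thus $\tau=\infty$ almost surely and the process is regular.
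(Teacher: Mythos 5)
Your proposal is correct, but it follows a genuinely different route from the paper's proof. The paper argues analytically: it invokes Reuter's criterion (regularity holds if and only if every solution of $\mathbf{Q}\mathbf{y} = \zeta\mathbf{y}$ with $0 \leq y_i \leq 1$ is trivial), then kills any such solution by projecting onto the diagonals $D_k$: setting $y_k = \max_{(a,b)\in D_k} y_{ab}$, it shows $y_k$ is nondecreasing and satisfies $\zeta y_k/\lambda_k \leq y_{k+1} - y_k$, so telescoping under $\sum_k 1/\lambda_k = \infty$ forces $y_k$ to exceed $1$ unless $y \equiv 0$. You instead argue probabilistically: regularity is identified with non-explosion of the minimal process, explosion is reduced to blow-up of the total size $S = X_1 + X_2$, and $S$ is dominated pathwise by a pure birth process with rates $\lambda_k$, after which Feller's classical criterion concludes. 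The two proofs share the essential structural insight --- the total size increases at rate at most $\lambda_k$ on $D_k$, while transition events leave it fixed and deaths decrease it --- and both collapse the bivariate problem onto the one-dimensional pure-birth condition; the paper performs this collapse analytically (maximization over $D_k$), you perform it by coupling. Your route is arguably more illuminating, since it explains \emph{why} the condition is exactly the pure-birth regularity condition that the paper itself notes the theorem generalizes, and your handling of the coupling's delicate point (domination need only be enforced when $S = Y$, where the level-maximal rate $\lambda_k$ applies) is the right fix for $S$ not being Markov. What it costs is reliance on two pieces of general theory that must be imported --- the equivalence of uniqueness for the Kolmogorov system with non-explosion of the minimal chain, and the marginal-correctness of the joint coupled construction --- whereas the paper's verification is self-contained given the single cited Reuter lemma.
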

 \begin{proof}
 We will apply the following Reuter's condition \citep{reuter1957}:
 \begin{lemma}
 Let ${\bf Q} = \{q_{ij}\}$ be a conservative matrix, such that $ - q_{ii} = \sum_{j \ne i}{q_{ij}} < \infty$. A continuous-time Markov chain associated with ${\bf Q}$ is regular if and only if for some $\zeta > 0$, the equation $ {\bf Q}{\bf y} = \zeta {\bf y}$
 subject to $0 \leq y_i \leq 1$ has only trivial solution ${\bf y} = {\bf 0}$.
 \label{lem:reg}
 \end{lemma}
 
 For a general birth/birth-death process, states $i$ and $j$ are in $\mathbb{N} \times \mathbb{N}$. Let $\{y_{ab}\}_{a,b \in \mathbb{N}}$  be a solution of ${\bf Q}{\bf y} = \zeta {\bf y}$ such that $y_{ab} \in [0,1]$ for any $a$ and $b$. Then, we have
 \begin{equation}
 (\zeta +  \lambda^{(1)}_{ab} + \lambda^{(2)}_{ab} + \mu^{(2)}_{ab} + \gamma_{ab}) y_{ab} = \lambda^{(1)}_{ab} y_{a+1,b} +  \lambda^{(2)}_{ab} y_{a,b+1} +  \mu^{(2)}_{ab}y_{a,b-1} + \gamma_{ab}y_{a+1,b-1} .
 \end{equation}
 Defining $y_k = \max_{(a,b) \in D_k} \{ y_{ab} \}$ and $(a_k,b_k) = \argmax_{(a,b) \in D_k} \{ y_{ab} \}$, we deduce that
 \begin{align}
  (\zeta +  \lambda^{(1)}_{a_kb_k} + \lambda^{(2)}_{a_kb_k} + \mu^{(2)}_{a_kb_k}) y_k &\leq (\lambda^{(1)}_{a_kb_k} +  \lambda^{(2)}_{a_kb_k}) y_{k+1} +  \mu^{(2)}_{a_kb_k}y_{k-1},~\mbox{and} \nonumber \\
  \zeta y_k + \mu^{(2)}_{a_kb_k}(y_k - y_{k-1}) &\leq (\lambda^{(1)}_{a_kb_k} +  \lambda^{(2)}_{a_kb_k}) (y_{k+1} - y_k).
  \end{align}
  Since $\mu^{(2)}_{a_{-1}b_{-1}} = 0$, $y_k$ is an increasing sequence. Thus,
  \begin{equation}
  \frac{\zeta}{\lambda_k} y_k \leq y_{k+1} - y_k.
 \end{equation}
 Assuming that there exists $k_0$ such that $y_{k_0} > 0$, we obtain
 \begin{equation}
 y_k \geq y_{k_0} + \zeta \sum_{i=k_0}^{k-1}{\frac{y_i}{\lambda_i}} \geq y_{k_0}(1 + \zeta \sum_{i=k_0}^{k-1}{\frac{1}{\lambda_i}}),
 \end{equation}
that is larger than $1$ if $k$ is big enough. Hence $y_k = 0$ for every $k$. Then, the theorem is proved by applying Lemma \ref{lem:reg}. 
 \end{proof}
 
 Note that the condition in Theorem \ref{thm:reg_bbd} generalizes 
 the classical regularity condition of a pure birth process \citep{feller1968introduction}. From now on, we assume that our birth/birth-death processes are regular.


\subsubsection{Recursive formula for transition probabilities}

In this section, we establish a recursion to calculate the transition probabilities $P^{a_0b_0}_{ab}(t)$ of a birth/birth-death process. Since we assume that our birth/birth-death process is regular, these transition probabilities are unique. 

We first note that $P^{a_0b_0}_{ab}(t) = 0$ for all $a<a_0$. Let $f_{ab}(s),~s \in \mathbb{C}$, be the Laplace transform of $P^{a_0b_0}_{ab}(t)$, that is
\begin{equation}
f_{ab}(s) = \mathcal{L}[P^{a_0b_0}_{ab}(t)](s) = \int_0^\infty{e^{-st}P^{a_0b_0}_{ab}(t)dt}.
\end{equation}
From (\ref{eqn:trans_eq}), we have
\begin{align}
s f_{ab}(s) - P^{a_0b_0}_{ab}(0) &= \lambda^{(1)}_{a-1,b} f_{a-1,b}(s) + \lambda^{(2)}_{a,b-1} f_{a,b-1}(s) + \mu^{(2)}_{a,b+1} f_{a,b+1}(s) \nonumber\\ 
& + \gamma_{a-1,b+1} f_{a-1,b+1}(s)  - (\lambda^{(1)}_{ab} + \lambda^{(2)}_{ab} + \mu^{(2)}_{ab} + \gamma_{ab}) f_{ab}(s),~(a,b) \in \mathbb{N}^2.
 \label{eqn:Lap}
\end{align}
Note that $f_{ab}(s)$ is the unique solution of (\ref{eqn:Lap}) by the uniqueness of $P^{a_0b_0}_{ab}(t)$. We construct the recursive approximation formulae for $f_{ab}(s)$ using continued fractions. Appendix \ref{sec:cf} provides necessary background on continued fractions and their convergents. Denote 
\begin{align}
& x_{a1} = - \frac{1}{\mu^{(2)}_{a1}};~x_{ab} = - \frac{\lambda^{(2)}_{a,b-2}}{\mu^{(2)}_{ab}},~b \geq 2 \nonumber \\
& y_{ab} = - \frac{s + \lambda^{(1)}_{a,b-1} + \lambda^{(2)}_{a,b-1} + \mu^{(2)}_{a,b-1} + \gamma_{a,b-1}}{\mu^{(2)}_{ab}},~b \geq 1,
\end{align}
and consider the following continued fraction
\begin{equation}
\phi^{(0)}_{a0}(s)  = \cfrac{x_{a1}}{
y_{a1} +
\cfrac{x_{a2}}{
y_{a2} +
\cfrac{x_{a3}}{
y_{a3} + \cdots
}~.}} 
\label{eqn:phi1}
\end{equation}
We can construct the sequence $\{ \phi^{(0)}_{ab}(s) \}_{b=0}^\infty$  (Definition \ref{def:corrseq}, Appendix \ref{sec:cf}) as follows:
\begin{align}
& (s + \lambda^{(1)}_{a0} + \lambda^{(2)}_{a0})\phi^{(0)}_{a0}(s) -  \mu^{(2)}_{a1}\phi^{(0)}_{a1}(s) = 1,~\mbox{and} \nonumber \\
& (s + \lambda^{(1)}_{a,b-1} + \lambda^{(2)}_{a,b-1} + \mu^{(2)}_{a,b-1} + \gamma_{a,b-1} )\phi^{(0)}_{a,b-1}(s) - \lambda^{(2)}_{a,b-2}\phi^{(0)}_{a,b-2}(s) - \mu^{(2)}_{ab}\phi_{ab}(s) = 0,~b \geq 2.
\label{eqn:fixa}
\end{align}
Comparing the sequences in (\ref{eqn:fixa}) with (\ref{eqn:Lap}), we deduce that $ \mathcal{L}^{-1}\left[\phi^{(0)}_{ab}(s)\right] =  P^{a_00}_{ab}(t)$. Since $P^{a_00}_{ab}(t)$ is a probability distribution, we have $\sum_{(a,b) \in \mathbb{N} \times \mathbb{N}}{P^{a_00}_{ab}(t)} = 1$. Taking the Laplace transform of the previous equation, we get $ \sum_{(a,b) \in \mathbb{N} \times \mathbb{N}}{\phi^{(0)}_{ab}(s)} = 1/s$. Hence, $\lim_{b \to \infty} \phi^{(0)}_{a_0b}(s) = 0$ for every $s > 0$. By Lemma \ref{lem:cf} (Appendix \ref{sec:cf}), $\phi^{(0)}_{a0}(s)$ converges for every $s > 0$, and
\begin{equation}
\phi^{(0)}_{ab}(s) = \prod_{i=1}^{b}{x_{ai}} \cfrac{x_{a,b+1}}{
Y_{a,b+1} + \cfrac{x_{a,b+2} Y_{ab}}{
y_{a,b+2} + \cfrac{x_{a,b+3}}{
y_{a,b+3} + \cfrac{x_{a,b+4}}{
y_{a,b+4} + \cdots
}~,}}}
\label{eqn:phi2}
\end{equation}
where $Y_{ab}$ is the denominator of the $b^{\mbox{\tiny th}}$ convergent of $\phi^{(0)}_{a0}(s)$. 

From (\ref{eqn:Lap}), we note that
\begin{equation}
(s + \lambda^{(1)}_{a_0b} + \lambda^{(2)}_{a_0b} + \mu^{(2)}_{a_0b} +\gamma_{a_0b})f_{a_0b} - \lambda^{(2)}_{a_0,b-1}f_{a_0,b-1}(s) - \mu^{(2)}_{a_0,b+1}f_{a_0,b+1}(s) = 1_{\{b = b_0\}},~b \in \mathbb{N}.
\end{equation}
By Lemma \ref{lem:sys} (Appendix \ref{sec:cf}), $f_{a_0b}(s) = \phi^{(b_0)}_{a_0b}(s)$ where
\begin{equation}
\phi^{(m)}_{ab}(s) = \begin{cases} \frac{(-1)^{m-b+1} Y_{ab}}{\mu^{(2)}_{a,m+1} \prod_{i=1}^{m+1}{x_{ai}}} \phi^{(0)}_{am}(s), & \mbox{if } b \leq m \\  \frac{- Y_{am}}{\mu^{(2)}_{a,m+1} \prod_{i=1}^{m+1}{x_{ai}}}\phi^{(0)}_{ab}(s), & \mbox{if } b \geq m. \end{cases}
\label{eqn:phi3}
\end{equation}

Next, we obtain formulae for approximating $f_{ab}(s)$ recursively assuming that we already have evaluated $f_{a-1,b}(s)$. Again, from (\ref{eqn:trans_eq}), we have
\begin{equation}
(s + \lambda^{(1)}_{ab} + \lambda^{(2)}_{ab} + \mu^{(2)}_{ab} + \gamma_{ab})f_{ab}(s) - \lambda^{(2)}_{a,b-1}f_{a,b-1}(s) - \mu^{(2)}_{a,b+1}f_{a,b+1}(s) = \lambda^{(1)}_{a-1,b}f_{a-1,b}(s) + \gamma_{a-1,b+1}f_{a-1,b+1}(s),
\label{eqn:infinitesys}
\end{equation}
for $b \in \mathbb{N}$.  We approximate $f_{ab}(s)$ by solving a truncated version of (\ref{eqn:infinitesys})
for $0 \leq b \leq B$, where $B$ is sufficiently large. The intuition of how to choose $B$ follows from the observation that we want $\sum_{a=a_0}^\infty \sum_{b=B+1}^\infty {P^{a_0b_0}_{ab}(t)}$ to be small. By Lemma \ref{lem:sys} (Appendix \ref{sec:cf}), we have the following approximation:
\begin{equation}
f_{ab}(s) \approx \sum_{m=0}^B{\left[\lambda^{(1)}_{a-1,m}f_{a-1,m}(s) + \gamma_{a-1,m+1}f_{a-1,m+1}(s)\right] \phi^{(m)}_{ab}(s)}.
\end{equation}
Therefore, the transition probabilities of a birth/birth-death process can be computed recursively using the following Theorem:

\begin{thm}
Let $\phi^{(m)}_{ab}(s)$ be defined as in (\ref{eqn:phi1}), (\ref{eqn:phi2}), and (\ref{eqn:phi3}). We have 
\begin{equation}
P^{a_0b_0}_{ab}(t) = \begin{cases} 0, & \mbox{if } a < a_0 \\  \mathcal{L}^{-1}\left[f_{ab}(s)\right](t), & \mbox{if } a \geq a_0, \end{cases}
\end{equation}
where $f_{a_0b}(s) = \phi^{(b_0)}_{a_0b}(s)$ and 
\begin{equation}
f_{ab}(s) \approx \sum_{m=0}^B{\left[\lambda^{(1)}_{a-1,m}f_{a-1,m}(s) + \gamma_{a-1,m+1}f_{a-1,m+1}(s)\right] \phi^{(m)}_{ab}(s)},~a>a_0.
\label{eqn:trunc}
\end{equation}
Here, $\mathcal{L}^{-1}(.)$ denotes the inverse Laplace transform and $B$ is the truncation level.
\label{thm:trans}
\end{thm}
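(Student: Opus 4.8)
The plan is to prove the formula by induction on the first coordinate $a$, exploiting that $X_1(t)$ is non-decreasing along every sample path: among the four state-changing events in (\ref{eqn:bbd}) none lowers $X_1$, so a chain started at $(a_0,b_0)$ never visits a state whose first coordinate is below $a_0$. This immediately gives $P^{a_0b_0}_{ab}(t)=0$ for $a<a_0$, hence $f_{ab}(s)=0$ there, which both settles the first case and supplies the vanishing boundary terms that will drive the induction.

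For the base case $a=a_0$, I would specialize the Laplace-transformed system (\ref{eqn:Lap}) to $a=a_0$. The contributions $\lambda^{(1)}_{a_0-1,b}f_{a_0-1,b}(s)$ and $\gamma_{a_0-1,b+1}f_{a_0-1,b+1}(s)$ vanish because $a_0-1<a_0$, so the system collapses to a purely tridiagonal recursion in $b$ with right-hand side $P^{a_0b_0}_{a_0b}(0)=\ind_{\{b=b_0\}}$. To solve it I would first verify that the sequence $\{\phi^{(0)}_{a_0b}(s)\}$ built in (\ref{eqn:fixa}) is, after inverse transform, the family $P^{a_00}_{a_0b}(t)$; this is obtained by matching (\ref{eqn:fixa}) against (\ref{eqn:Lap}) with $b_0=0$ and invoking regularity, which guarantees that the transition probabilities are the unique solution of the Kolmogorov system and hence that $f_{ab}(s)$ is the unique solution of its transform. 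The probabilistic reading then yields the normalization $\sum_{(a,b)}\phi^{(0)}_{ab}(s)=1/s$, forcing $\lim_{b\to\infty}\phi^{(0)}_{a_0b}(s)=0$; this decay is precisely the hypothesis needed to apply Lemma~\ref{lem:cf} for convergence of the continued fraction (\ref{eqn:phi1}) and to identify its convergents, and then Lemma~\ref{lem:sys} to express the unique decaying solution of the tridiagonal system as $f_{a_0b}(s)=\phi^{(b_0)}_{a_0b}(s)$ with $\phi^{(m)}$ as in (\ref{eqn:phi3}).

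For the inductive step $a>a_0$, I would assume $f_{a-1,b}(s)$ already determined for every $b$ and rearrange (\ref{eqn:Lap}) into the form (\ref{eqn:infinitesys}), a tridiagonal system in $b$ whose right-hand side $\lambda^{(1)}_{a-1,b}f_{a-1,b}(s)+\gamma_{a-1,b+1}f_{a-1,b+1}(s)$ is now a \emph{known} source carried over from level $a-1$. By linearity the solution is the superposition of the responses to each unit source located at index $m$, and Lemma~\ref{lem:sys} identifies each such response with $\phi^{(m)}_{ab}(s)$; truncating the source at level $B$ then yields exactly the stated approximation (\ref{eqn:trunc}).

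The main obstacle is the rigorous treatment of the infinite tridiagonal systems rather than any single computation. Two points demand care: establishing that Lemma~\ref{lem:sys} singles out the correct, probabilistically meaningful (summable, decaying) solution — which rests entirely on the continued-fraction convergence secured through the normalization argument above — and controlling the truncation in the inductive step, where one must argue that replacing the infinite system by its restriction to $0\le b\le B$ incurs an error governed by the tail mass $\sum_{a\ge a_0}\sum_{b>B}P^{a_0b_0}_{ab}(t)$, which regularity renders negligible for $B$ large.
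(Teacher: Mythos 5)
Your proposal is correct and follows essentially the same route as the paper: zero probability for $a<a_0$ by monotonicity of $X_1$, Laplace transformation of the forward equations with uniqueness from regularity, identification of the base-case solution $f_{a_0b}(s)=\phi^{(b_0)}_{a_0b}(s)$ via the normalization argument, Lemma~\ref{lem:cf}, and Lemma~\ref{lem:sys}, and then superposition over the known source terms from level $a-1$ with truncation at $B$ for the recursive step. The only cosmetic difference is that you phrase the recursion as an explicit induction on $a$ and flag the truncation-error control up front, which the paper instead defers to Appendix~\ref{sec:conv_trunc}.
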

If the number of {\bf type 2} particles is bounded by $B^*$, we choose $B=B^*$. 
In this case, the approximation in Theorem \ref{thm:trans} is exact.
We prove that the output of our approximation scheme (\ref{eqn:trunc}) converges to $f_{ab}(s)$ as $B$ goes to infinity in Appendix \ref{sec:conv_trunc}.
Further, the transition probability returned by Theorem \ref{thm:trans} converges to the true transition probability.
This truncation error can be bounded explicitly by extending the coupling argument in \citet{crawford2016coupling} to multivariate processes.
However, we leave it as a subject of future work because a complete treatment is beyond the scope of this paper. 


\subsubsection{Numerical approximation of the transitions probabilities}

To approximate $P^{a_0b_0}_{ab}(t)$ using Theorem \ref{thm:trans}, we need to compute two quantities: the continued fractions $\phi^{(m)}_{ab}(s)$, and the inverse Laplace transform $\mathcal{L}^{-1}\left[f_{ab}(s)\right](t)$. We efficiently evaluate the continued fractions $\phi^{(m)}_{ab}(s)$ through the modified Lentz method \citep{lentz1976, thompson1986}; see Appendix \ref{sec:lentz} for more details. This algorithm enables us to control for and limit truncation error. To approximate the inverse Laplace transform $\mathcal{L}^{-1}\left[f_{ab}(s)\right](t)$, we apply the method proposed in \citet{abate1992} using a Riemann sum: 
 \begin{equation}
\mathcal{L}^{-1}\left[f_{ab}(s)\right](t) \approx \frac{e^{H/2}}{2t} {\cal R} \left [ f_{ab} \left ( \frac{H}{2t} \right ) \right ] + \frac{e^{H/2}}{t} \sum_{k=1}^\infty{(-1)^k {\cal R} \left [ f_{ab} \left ( \frac{H + 2k \pi i}{2t} \right ) \right ]}.
 \label{eqn:invLap}
 \end{equation}
 Here ${\cal R} [z]$ is the real part of $z$ and $H$ is a positive real number. \citet{abate1992} show that the error that arises in (\ref{eqn:invLap}) is bounded by $1/(e^H - 1)$. Moreover, we can use the Levin transform \citep{levin1973} to improve the rate of convergence  because the series in (\ref{eqn:invLap}) is an alternating series when $ {\cal R} \{ f_{ab} [(H + 2k \pi i)/(2t)] \}$ have the same sign. These numerical methods have been successfully applied by \citet{crawford2012} to compute the transition probabilities of birth-death processes.  

In practice, to handle situations where $\mu^{(2)}_{ab}$ can possibly equal to $0$ for some $(a,b)$, we re-parametrize $x_{ab}$ and $y_{ab}$ as follows:
\begin{align}
& x_{a1} = 1;~x_{ab} = - \lambda^{(2)}_{a,b-2} \mu^{(2)}_{a,b-1},~b \geq 2,~\mbox{and} \nonumber \\
& y_{ab} = s + \lambda^{(1)}_{a,b-1} + \lambda^{(2)}_{a,b-1} + \mu^{(2)}_{a,b-1} + \gamma_{a,b-1},~b \geq 1.
\end{align}
With this new parametrization, we obtain
\begin{equation}
\phi^{(m)}_{ab}(s) = \begin{cases}  \cfrac{(\prod_{i=b+1}^m{\mu^{(2)}_{ai}}) Y_{ab}}{
Y_{a,m+1} + \cfrac{x_{a,m+2} Y_{am}}{
y_{a,m+2} + \cfrac{x_{a,m+3}}{
y_{a,m+3} + \cfrac{x_{a,m+4}}{
y_{a,m+4} + \cdots
}}}}~, & \mbox{if } b \leq m \\  
\cfrac{(\prod_{i=m+1}^b{\lambda^{(2)}_{ai}}) Y_{am}}{
Y_{a,b+1} + \cfrac{x_{a,b+2} Y_{ab}}{
y_{a,b+2} + \cfrac{x_{a,b+3}}{
y_{a,b+3} + \cfrac{x_{a,b+4}}{
y_{a,b+4} + \cdots
}}}}~, & \mbox{if } b \geq m. \end{cases}
\end{equation}
Our complete algorithm to compute the transition probabilities of birth/birth-death processes is implemented in the function \texttt{bbd\_prob} in a new \texttt{R} package called \texttt{MultiBD}. The function takes $t$, $a_0$, $b_0$, $\lambda^{(1)}_{ab}$, $\lambda^{(2)}_{ab}$, $\mu^{(2)}_{ab}$, $\gamma_{ab}$, $A$, $B$ as inputs and returns the transition probability matrix $\{ P^{a_0b_0}_{ab}(t) \}_{a_0 \leq a \leq A, 0 \leq b \leq B}$. Here, there is no requirement for $A$ while $B$ needs to be large enough such that $\sum_{a=a_0}^A \sum_{b=B+1}^\infty {P^{a_0b_0}_{ab}(t)}$ is small. We can check to see if $B$ is large enough by checking if $\sum_{a=a_0}^A {P^{a_0b_0}_{aB}(t)}$ is sufficiently small.

In practice, the computational complexity of evaluating each term $(f_{ab}(s))_{a_0 \leq a \leq A, 0 \leq b \leq B}$ is $\mathcal{O}((A-a_0)B^2)$ because the Lentz algorithm terminates quickly. Let $K$ be the number of iterations required by the Levin acceleration method \citep{levin1973} to achieve a certain error bound for the Riemann sum in (\ref{eqn:invLap}). Then, the total complexity of our algorithm is $\mathcal{O}((A-a_0)B^2K)$. However, evaluation of $\{ f_{ab} [(H + 2k \pi i)/(2t)] \}_{k=1}^K$ can be 
efficiently parallelized 
across different values of $k$, and we exploit this parallelism via multicore processing, delegating most of the computational work to compiled \texttt{C++} code.


\subsection{Death/birth-death processes}

Similar to the birth/birth-death process, a death/birth-death process is also a special case of competition processes. The only difference is that the number of {\bf type 1} particles is decreasing instead of increasing. Mathematically, possible transitions of a death/birth-death process ${\bf X}(t) = (X_1(t),X_2(t))$ during $(t,t+dt)$ are:
\begin{align}
& \Pr \left \{ \begin{array}{l | l} X_1(t+dt) = a-1 & X_1(t)= a \\ X_2(t+dt) = b & X_2(t) = b \end{array} \right \} = \mu^{(1)}_{ab} dt + o(dt) \nonumber \\
& \Pr \left \{ \begin{array}{l | l} X_1(t+dt) = a & X_1(t)= a \\ X_2(t+dt) = b+1 & X_2(t) = b \end{array} \right \} = \lambda^{(2)}_{ab} dt + o(dt) \nonumber \\
& \Pr \left \{ \begin{array}{l | l} X_1(t+dt) = a & X_1(t)= a \\ X_2(t+dt) = b-1 & X_2(t) = b \end{array} \right \} = \mu^{(2)}_{ab} dt + o(dt) \nonumber \\
& \Pr \left \{ \begin{array}{l | l} X_1(t+dt) = a-1 & X_1(t)= a \\ X_2(t+dt) = b+1 & X_2(t) = b \end{array} \right \} = \gamma_{ab}dt + o(dt) \nonumber \\
& \Pr \left \{ \begin{array}{l | l} X_1(t+dt) = a~~~~~ & X_1(t)= a \\ X_2(t+dt) = b & X_2(t) = b \end{array} \right \} = 1 - (\mu^{(1)}_{ab} +\lambda^{(2)}_{ab} + \mu^{(2)}_{ab} + \gamma_{ab})dt + o(dt),\end{align}
where $\mu^{(1)}_{ab} \geq 0$ is the death rate of {\bf type 1} particles given $a$ {\bf type 1} particles and $b$ {\bf type 2} particles, $\lambda^{(2)}_{ab} \geq 0$ is the birth rate of {\bf type 2} particles, $\mu^{(2)}_{ab} \geq 0$ is the death rate of {\bf type 2} particles, and $\gamma_{ab}$ is the transition rate from {\bf type 1} particles to {\bf type 2} particles. Again, we fix $\mu^{(1)}_{0,b} = \lambda^{(2)}_{a,-1} = \mu^{(2)}_{0,b}  = \gamma_{0,b} = \gamma_{a,-1} = 0$.

Following a similar argument as in Section \ref{sec:bbd_regularity}, we obtain 
a
sufficient condition for regularity of a death/birth-death process. Denote
 \begin{align}
  D_k &= \{ (a,b): a + b = k, a \leq a_0 \} \in \mathbb{N} \times \mathbb{N} \nonumber \\
  \lambda_k &= \max_{(a,b) \in D_k} \{ \lambda^{(2)}_{ab} \}  \nonumber \\
  \mu_k &= \min_{(a,b) \in D_k} \{ \mu^{(1)}_{ab} + \mu^{(2)}_{ab} \}  \nonumber \\
  \sigma_0 &=1,~ \sigma_k = \frac{\lambda_0 \ldots \lambda_{k-1}}{\mu_1 \ldots \mu_k} ,
 \end{align}
where $a_0$ is the number of {\bf type 1} particles at time $t = 0$. The following Theorem is a direct application of Theorem 1 in \citet{iglehart1964}
 \begin{thm}
 A sufficient condition for regularity of a death/birth-death process is
 \begin{equation}
 \sum_{k=0}^\infty{ \left ( \frac{1}{\lambda_k \sigma_k} \sum_{i=0}^k{\sigma_i} \right )} = \infty.
 \end{equation}
 \end{thm}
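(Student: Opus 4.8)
The plan is to mirror the proof of Theorem \ref{thm:reg_bbd}, again routing through Reuter's criterion (Lemma \ref{lem:reg}), but now reducing the bivariate problem to a one-dimensional \emph{birth-death} comparison on the level sets $D_k$, at which point the stated condition is exactly the regularity criterion supplied by Theorem 1 of \citet{iglehart1964}. Concretely, I would first observe that because the first coordinate is non-increasing, a process started from $a_0$ type-1 particles remains in $\{a \le a_0\}$, which justifies restricting $D_k$ to $a \le a_0$ as in the statement, so that the extremal rates $\lambda_k,\mu_k$ are taken over exactly the reachable states.

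Next I would suppose $\{y_{ab}\}$ is a solution of $\mathbf{Q}\mathbf{y} = \zeta \mathbf{y}$ with $y_{ab}\in[0,1]$ and write the componentwise identity
\[
(\zeta + \mu^{(1)}_{ab} + \lambda^{(2)}_{ab} + \mu^{(2)}_{ab} + \gamma_{ab}) y_{ab} = \mu^{(1)}_{ab} y_{a-1,b} + \lambda^{(2)}_{ab} y_{a,b+1} + \mu^{(2)}_{ab} y_{a,b-1} + \gamma_{ab} y_{a-1,b+1}.
\]
Setting $y_k = \max_{(a,b)\in D_k} y_{ab}$, attained at $(a_k,b_k)$, I would classify the four neighbors by total size: $(a_k-1,b_k)$ and $(a_k,b_k-1)$ lie in $D_{k-1}$, the type-2 birth $(a_k,b_k+1)$ lies in $D_{k+1}$, and the shift $(a_k-1,b_k+1)$ stays in $D_k$. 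Bounding each $y$ by the corresponding $y_{k\pm1}$ or $y_k$ and cancelling the $\gamma$-term (which keeps the level fixed) yields the scalar birth-death inequality
\[
\zeta\, y_k \le \bigl(\mu^{(1)}_{a_kb_k}+\mu^{(2)}_{a_kb_k}\bigr)(y_{k-1}-y_k) + \lambda^{(2)}_{a_kb_k}(y_{k+1}-y_k),
\]
with effective birth rate $\lambda^{(2)}_{a_kb_k}\le\lambda_k$ and effective death rate $\mu^{(1)}_{a_kb_k}+\mu^{(2)}_{a_kb_k}\ge\mu_k$. This is the one-dimensional Reuter relation for a birth-death chain whose non-explosion is governed by $\sigma_k = \lambda_0\cdots\lambda_{k-1}/(\mu_1\cdots\mu_k)$ through the displayed series; under the hypothesis $\sum_k (\lambda_k\sigma_k)^{-1}\sum_{i\le k}\sigma_i = \infty$, Theorem 1 of \citet{iglehart1964} forces $y_k\equiv 0$, hence $\mathbf{y}=\mathbf{0}$, and Lemma \ref{lem:reg} then gives regularity.

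The hard part will be the passage from the \emph{pointwise} rates at the maximizer $(a_k,b_k)$ to the \emph{extremal} rates $\lambda_k,\mu_k$ in the direction that legitimately dominates the solution by a birth-death chain: since increasing the death rate and decreasing the birth rate both promote regularity, one must check that the worst case for non-explosion is captured by the largest type-2 birth rate $\lambda_k$ and the smallest total death rate $\mu_k$, which is what the definitions encode. A second subtlety, and the reason the simpler criterion $\sum 1/\lambda_k=\infty$ of Theorem \ref{thm:reg_bbd} no longer suffices, is that genuine type-1 and type-2 deaths destroy the monotonicity of $\{y_k\}$ that made the earlier telescoping argument immediate; here $\{y_k\}$ need not be increasing, so one cannot sum the inequality directly and must instead invoke the full $\sigma_k$-weighted criterion. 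Finally I would confirm that the boundary conventions $\mu^{(1)}_{0,b}=\mu^{(2)}_{0,b}=\gamma_{0,b}=0$ are consistent with the reduced chain at its lower boundary, closing the argument.
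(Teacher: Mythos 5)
Your level-set reduction is sound---the neighbor classification, the cancellation of the $\gamma$-term inside $D_k$, and the restriction to $a\le a_0$ are all correct---but the proposal contains an internal contradiction at exactly the step you flag as ``the hard part.'' To pass from the pointwise relation at the maximizer,
\begin{equation*}
\zeta\, y_k \le \bigl(\mu^{(1)}_{a_kb_k}+\mu^{(2)}_{a_kb_k}\bigr)(y_{k-1}-y_k)+\lambda^{(2)}_{a_kb_k}(y_{k+1}-y_k),
\end{equation*}
to the same inequality with the extremal rates $\mu_k$ and $\lambda_k$, you need $y_{k-1}-y_k\le 0$ (so that replacing the pointwise death coefficient, which dominates $\mu_k$, by $\mu_k$ only increases the right-hand side) and $y_{k+1}-y_k\ge 0$ (so that replacing $\lambda^{(2)}_{a_kb_k}\le\lambda_k$ by $\lambda_k$ only increases it). Monotonicity of $\{y_k\}$ is therefore not a dispensable convenience of the birth/birth-death argument; it is precisely what legitimizes your extremal-rate domination. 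Yet your ``second subtlety'' asserts that monotonicity fails for death/birth-death processes, which, if true, would leave your displayed scalar inequality unjustified and the argument dead. In fact monotonicity \emph{does} hold, by the same induction as in Theorem \ref{thm:reg_bbd}: here $D_0=\{(0,0)\}$, and the boundary conventions $\mu^{(1)}_{0,b}=\mu^{(2)}_{0,b}=\gamma_{0,b}=0$ kill every rate out of the corner except $\lambda^{(2)}_{00}$, so $(\zeta+\lambda^{(2)}_{00})y_{00}=\lambda^{(2)}_{00}y_{01}$ forces $y_0\le y_1$; then, given $y_k\ge y_{k-1}$, the pointwise relation yields $\lambda^{(2)}_{a_kb_k}(y_{k+1}-y_k)\ge\zeta y_k\ge 0$, hence $y_{k+1}\ge y_k$ (and if $\lambda^{(2)}_{a_kb_k}=0$ it forces $y_k=0$, making the step trivial). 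What actually changes relative to Theorem \ref{thm:reg_bbd} is not monotonicity but the endgame: instead of discarding the death term and telescoping (which only gives $\sum 1/\lambda_k=\infty$), you keep it, set $z_k=y_{k+1}-y_k\ge 0$, and iterate $\lambda_k z_k\ge \zeta y_k+\mu_k z_{k-1}$ down to a level $k_0$ with $y_{k_0}>0$, obtaining $z_k\ge \zeta\, y_{k_0}\,(\lambda_k\sigma_k)^{-1}\sum_{i=k_0}^{k}\sigma_i$; summing over $k$ contradicts $y\le 1$ under the stated hypothesis.

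A second, smaller defect: you cannot close the argument by citing Theorem 1 of \citet{iglehart1964} for the one-dimensional relation. That theorem is a regularity criterion for multivariate competition processes; it says nothing about nonnegative bounded sequences satisfying a scalar inequality, so invoking it there is a category error. Either finish by hand with the iteration above (which is the classical Reuter-type argument that Iglehart's own proof runs), or do what the paper does: skip the reduction entirely and apply Iglehart's Theorem 1 \emph{directly} to the death/birth-death process viewed as a competition process on $\{0,\dots,a_0\}\times\mathbb{N}$, for which the stated $\lambda_k$, $\mu_k$, and $\sigma_k$ are exactly the quantities in his hypothesis---the paper's entire proof is that one-line citation. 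Your plan to reconstruct the proof behind the citation is legitimate and more self-contained, but then the scalar endgame must be proved, not outsourced to a theorem that does not apply to it.
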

We note that if we do a transformation for a death/birth-death process ${\bf X}(t) = (X_1(t),X_2(t))$ as follows:
\begin{align}
Y_1(t) &= a_0 - X_1(t) \nonumber \\
Y_2(t) &= B - X_2(t).
\label{eqn:trans}
\end{align}
Then, ${\bf Y}(t) = (Y_1(t),Y_2(t))$ can be considered as a birth/birth-death process.
Therefore, the transition probabilities of a death/birth-death process can also be computed using the \texttt{R} function \texttt{bbd\_prob} and the transformation (\ref{eqn:trans}). Again, we want to choose $B$ such that $\sum_{a=0}^{a_0} \sum_{b=B+1}^\infty {P^{a_0b_0}_{ab}(t)}$ is small. We implement this procedure in the function \texttt{dbd\_prob} in our \texttt{R} package \texttt{MultiBD}. The function takes $t$, $a_0$, $b_0$, $\mu^{(1)}_{ab}$, $\lambda^{(2)}_{ab}$, $\mu^{(2)}_{ab}$, $\gamma_{ab}$, $A$, $B$ as inputs and returns the transition probability matrix $\{P^{a_0b_0}_{ab}(t)\}_{A \leq a \leq a_0, 0 \leq b \leq B}$. As for birth/birth-death processes, there is no requirement for $A$.


\section{Applications}
\label{sec:app}
Birth(death)/birth-death processes are appropriate for modeling two-type populations where the size of the first population is monotonically increasing (decreasing). Here we examine our methods in four applications: a within-host macro-parasite model, a birth-death-shift model for transposable elements, monomolecular reaction systems, and the stochastic SIR epidemiological model. We demonstrate that a birth (death)/birth-death process well captures the dynamics of these common biological problems, and inference using its transition probabilities often outperforms existing approximations. In particular, we emphasize that the birth (death)/birth-death process approach allows us to compute finite-time transition probabilities in the stochastic SIR model that were previously considered unknown or intractable without model simplification \citep{cauchemez2008}. 
\newcommand{\old}{\text{\tiny old}}
\newcommand{\new}{\text{\tiny new}}


\subsection{Monomolecular reaction systems} 

We illustrate the performance of our computational method by considering the following monomolecular reactions:

\begin{equation}
\begin{aligned}
&\text{Reaction} ~~R_{ab}:~~A \xrightarrow{r_{ab}} B \\
&\text{Reaction} ~~R_{ba}:~~B \xrightarrow{r_{ba}} A \\
&\text{Outflow} ~~~O_{b}:~~B \xrightarrow{o_{b}} *
\end{aligned}
\label{eqn:reactions}
\end{equation}
where $r_{ab}, r_{ba}$ is the reaction rates, and $o_{b}$ is the outflow rate.
Denote
\[
Q = \begin{pmatrix} 
- r_{ab} & r_{ba} \\
r_{ab} & -r_{ba} - o_{b} 
\end{pmatrix}, \quad
p^{(a)} = e^{Q t} 
\begin{pmatrix} 
1 \\
0
\end{pmatrix}, \quad
p^{(b)} = e^{Q t} 
\begin{pmatrix} 
0 \\
1
\end{pmatrix}.
\]
By Theorem 1 in \citet{jahnke2007solving}, the transition probabilities of the reaction system (\ref{eqn:reactions}) at time $t > 0$ is
\begin{equation}
P_{ab}^{a_0 b_0}(t) = \mathcal{M}(~.~, a_0, p^{(a)}) \star \mathcal{M}(~.~, b_0, p^{(b)})
\label{eqn:react_analytic}
\end{equation}
where $\mathcal{M}(x,N,p)$ is the multinomial distribution and $\star$ denotes the convolution operator. As analytic expressions for transition probabilities exist for this class of reactions, this example serves as a baseline for comparison to assess the accuracy of our method.

To study these processes in our framework, let $A(t)$ denote the total number of particle $A$ at time $t$ and $L(t)$ be the total number of particle $B$ leaving the system up to $t$.
Then, $\{L(t), A(t)\}$ is a birth/birth-death process with the following possible transitions during $(t, t + dt)$: 

\begin{align*}
&\Pr \left \{ \begin{array}{l | l} L(t+dt)~ = i+1  & L(t)= i \\ A(t+dt) = j~ & A(t) = j \end{array} \right \} =  o_b (a_0 + b_0 - i - j)^+ dt + o(dt), \\
&\Pr \left \{ \begin{array}{l | l} L(t+dt)~ = i  & L(t)= i \\ A(t+dt) = j+1~ & A(t) = j \end{array} \right \} =  r_{ba} (a_0 + b_0 - i - j)^+ dt + o(dt), \\
&\Pr \left \{ \begin{array}{l | l} L(t+dt)~ = i  & L(t)= i \\ A(t+dt) = j-1~ & A(t) = j \end{array} \right \} =  r_{ab}j dt + o(dt),~\mbox{and} \\
&\Pr \left \{ \begin{array}{l | l} L(t+dt)~ = i  & L(t)= i \\ A(t+dt) = j~~~~~~  & A(t) = j \end{array} \right \} =  1 - [r_{ab}j + (o_b + r_{ba}) (a_0 + b_0 - i - j)^+ ]dt + o(dt).
\end{align*}
Here $x^+ = \max(0,x)$.
Therefore, $P_{ab}^{a_0 b_0}(t)$ can be computed using our method implemented in the \texttt{R} function \texttt{bbd\_prob}.

We use \texttt{bbd\_prob} to calculate $\{P^{20,0}_{ab}(1)\}_{0 \leq a \leq 20, 0 \leq b \leq 20}$ of the reaction system (\ref{eqn:reactions}) with $r_{ab} = 2, r_{ba} = 0.5$ and $o_b = 1$. 
The $L_1$ distance between our result and the analytic result (\ref{eqn:react_analytic}) is less than $4.7 \times 10^{-9}$, thus confirming the accuracy of our method compared to explicit analytic solutions.


\subsection{Birth-death-shift model for transposable elements} 
Transposable elements or transposons are genomic sequences that can either duplicate, with a new copy moving to a new genomic location, move to a different genomic location, or be deleted from the genome.  \citet{rosenberg2003} model the number of copies of a particular transposon using a linear birth-death-shift process; a birth is a duplication event, a death is a deletion event, and shift is a switching position event. \citet{Xu2015} propose representing this birth-death-shift process by a linear multi-type branching process ${\bf X}(t) = (X_{\old}(t), X_{\new}(t))$ tracking the number of occupied sites where $X_{\old}(t)$ is the number of initially occupied sites and $X_{\new}(t)$ is the number of newly occupied sites. Let $\lambda$, $\mu$, and $\nu$ be the birth, death, and shift rates respectively. The transitions of ${\bf X}(t)$ during a small time interval occur with probabilities
\begin{align}
& \Pr \left \{ \begin{array}{l | l} X_{\old}(t+dt) = x_{\old} - 1~ & X_{\old}(t)~ = x_{\old} \\ X_{\new}(t+dt) = x_{\new} & X_{\new}(t) = x_{\new} \end{array} \right \} = (\mu x_{\old}) dt + o(dt), \nonumber \\
& \Pr \left \{ \begin{array}{l | l} X_{\old}(t+dt) = x_{\old} & X_{\old}(t)~ = x_{\old} \\ X_{\new}(t+dt) = x_{\new} - 1 & X_{\new}(t) = x_{\new} \end{array} \right \} = (\mu x_{\new})dt + o(dt), \nonumber \\
& \Pr \left \{ \begin{array}{l | l} X_{\old}(t+dt) = x_{\old} & X_{\old}(t)~ = x_{\old} \\ X_{\new}(t+dt) = x_{\new} + 1 & X_{\new}(t) = x_{\new} \end{array} \right \} = \lambda (x_{\old} + x_{\new}) dt + o(dt), \nonumber \\
& \Pr \left \{ \begin{array}{l | l} X_{\old}(t+dt) = x_{\old} - 1 & X_{\old}(t)~ = x_{\old} \\ X_{\new}(t+dt) = x_{\new} + 1 & X_{\new}(t) = x_{\new} \end{array} \right \} = (\nu x_{\old}) dt + o(dt),~\mbox{and} \nonumber \\
& \Pr \left \{ \begin{array}{l | l} X_{\old}(t+dt) = x_{\old}~~~~~~~ & X_{\old}(t)~ = x_{\old} \\ X_{\new}(t+dt) = x_{\new} & X_{\new}(t) = x_{\new} \end{array} \right \} =  1 - (\mu + \lambda + \nu) x_{\old} - (\mu + \lambda) x_{\new} dt + o(dt).
\end{align}
Equivalent to the branching process representation, notice that in this case ${\bf X}(t)$ is also a death/birth-death process. Hence, we can effectively compute its transition probabilities. In contrast, \citet{Xu2015} consider the probability generating function
\begin{equation}
\Phi_{a_0b_0}(t, s_1, s_2) = {\mathbb E} \left (s_1^{X_{\old}(t)} s_2^{X_{\new}(t)} | X_{\old}(0) = a_0, X_{\new}(0) = b_0 \right ) = \sum_{a = 0}^\infty \sum_{b=0}^\infty {P^{a_0b_0}_{ab}(t) s_1^a s_2^b},
\end{equation}
where
\begin{equation}
P^{a_0b_0}_{ab}(t) = \Pr \left \{ \begin{array}{l | l} X_{\old}(t)~ = a ~~ & X_{\old}(0)~ = a_0 \\ X_{\new}(t) = b & X_{\new}(0) = b_0 \end{array} \right \}.
\end{equation}
Because of the model-specific linearity in terms of $a$ and $b$ of the birth and death rates, one can evaluate $\Phi_{jk}(t, s_1, s_2)$ by solving an ordinary differential equation. Further transforming $s_1 = e^{2 \pi i w_1}$, $s_2 = e^{2 \pi i w_2}$, the generating function becomes a Fourier series
\begin{equation}
\Phi_{a_0b_0}(t, e^{2 \pi i w_1}, e^{2 \pi i w_2}) =  \sum_{a = 0}^\infty \sum_{b=0}^\infty {P^{a_0b_0}_{ab}(t)e^{2 \pi i a w_1}e^{2 \pi i b w_2}}.
\end{equation}
Therefore, \citet{Xu2015} retrieve the transition probabilities through approximating the integral as a Riemann sum
\begin{align}
P^{a_0b_0}_{ab}(t) &= \int_0^1 \int_0^1 {\Phi_{a_0b_0}(t, e^{2 \pi i w_1}, e^{2 \pi i w_2})e^{-2 \pi i a w_1} e^{-2 \pi i b w_2} dw_1 dw_2} \nonumber \\
&\approx \frac{1}{H^2} \sum_{u=0}^{H-1} \sum_{v=0}^{H-1} {\Phi_{jk}(t, e^{2 \pi i u/H}, e^{2 \pi i v/H})e^{-2 \pi i a u/H} e^{-2 \pi i b v/H}} ,
\end{align}
and show that choosing $H$ as the smallest power of $2$ greater than $\max(a,b)$ produces accurate estimates of the true transition probabilities of the model. The authors implement this method in the \texttt{R} package \texttt{bdsem}. Using their method, evaluating $\{P^{a_0b_0}_{ab}(t)\}_{0 \leq a,b \leq H}$ requires numerically solving $H^2$ linear ordinary differential equations (ODEs). 
We perform a simulation to compare the performance between \texttt{bdsem} and our function \texttt{dbd\_prob}.
Because \citet{Xu2015} already provide a thorough empirical validation that \texttt{bdsem} produces accurate transition probabilities compared to Monte Carlo estimates from the true model, we consider a comparison to their method and omit a complete reproduction of their simulation study.  Using both routines to compute the transition probabilities of a birth-death-shift process with rates $\lambda = 0.0188$, $\mu = 0.0147$, $\nu = 0.00268$ (estimated from the IS\textit{6110} data by \citet{rosenberg2003}) repeatedly over one hundred trials leads to a negligible difference in estimated probabilities. Specifically, we computed $\{P^{10,0}_{ab}(t)\}_{0 \leq a \leq 10, 0 \leq b \leq 50}$ at three different observation period lengths $t = 1, 5, 10$, and found that the $L_1$ distance between probabilities estimated by each method is less than $4 \times 10^{-8}$ across all cases. Here, the $L_1$ distance between two matrices ${\bf U} = (u_{ij})$ and ${\bf V} = (v_{ij})$ are defined as $\| {\bf U} - {\bf V} \| = \sum_{i,j}{|u_{ij} - v_{ij}|}$.

Having validated the accuracy of our approach, we turn to a runtime comparison. The ratios of CPU time required using \texttt{bdsem} compared to \texttt{dbd\_prob} are summarized in Figure \ref{fig:vs_generating}, and note that this result is obtained using a single-thread option for \texttt{dbd\_prob}. 
We see that \texttt{dbd\_prob}  is about $15$ to $30$ times faster than the \texttt{bdsem} implementation, while producing very similar results. 

While there is a large performance difference in wall clock time, we cannot immediately conclude that our method is faster then the method in \citet{Xu2015} because computation time may depend heavily on implementation. Nonetheless, we can make some remarks about the performance of both methods that are platform-independent. Notably, the \texttt{bdsem} implementation grows slower as $t$ increases while \texttt{dbd\_prob} does not. This is expected because solving ODEs is slower when the domain increases. However, it is worth mentioning that we can use the solution paths to get the solutions of these ODEs at other time points in the domain. For example, when we solve the ODEs at $t = 10$, we also get the solutions at $t = 1$ and $5$ for free. This point becomes important in applications where we need to compute the transition probabilities at several time points. Another downside of \texttt{bdsem} is that it computes $\{P^{10,0}_{ab}(t)\}_{0 \leq a,b \leq 50}$ instead of evaluating $\{P^{10,0}_{ab}(t)\}_{0 \leq a \leq 10, 0 \leq b \leq 50}$ directly as is done by \texttt{dbd\_prob}.

\begin{figure}[!h]
\begin{center}
\includegraphics[width=.4\textwidth]{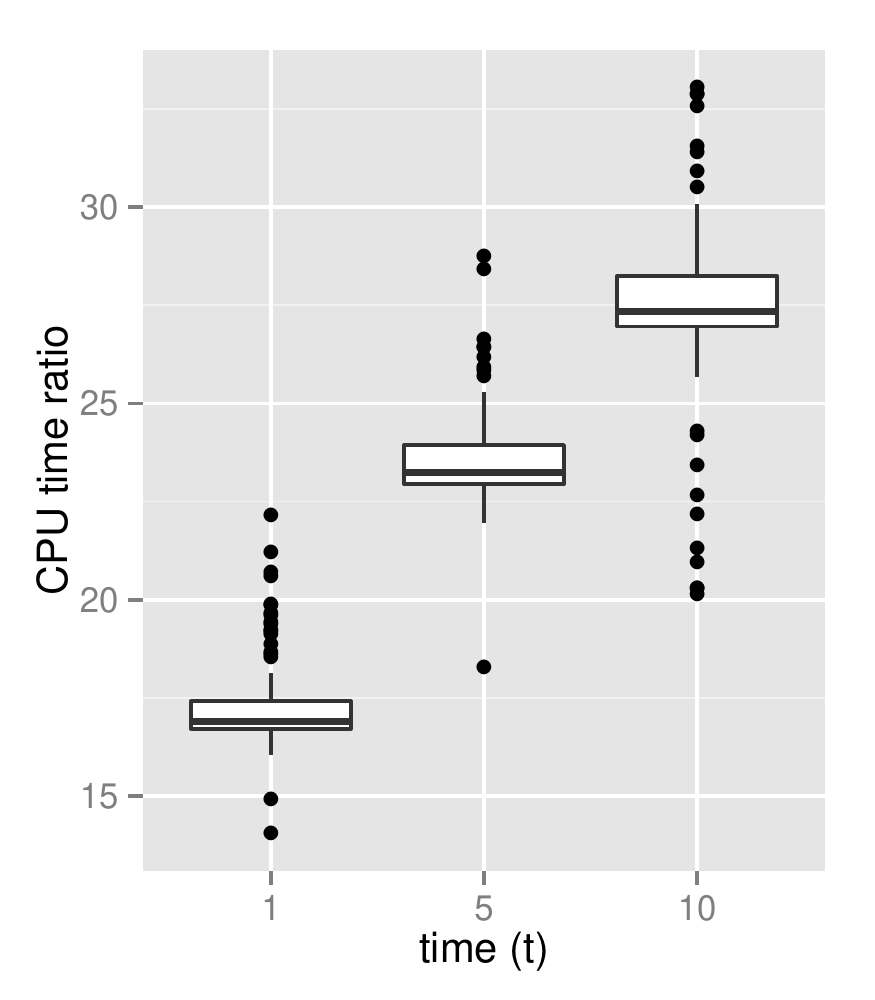}
\end{center}
\caption{CPU compute time ratios  of \texttt{bdsem} to \texttt{dbd\_prob} over 100 replications.}
\label{fig:vs_generating}
\end{figure}


\subsection{Within-host macro-parasite model}
\citet{riley2003} posit a stochastic model to describe a within-host macro-parasite population where \textit{Brugia pahangi} is the parasite and \textit{Felis catus} is the host. \textit{Brugia pahangi} is closely related to \textit{Brugia malayi} which infects millions of people in South and Southeast Asia. The model tracks the number of \textit{B. pahangi} larvae $L(t)$, the number of mature parasites $M(t)$, and hosts experience of infection $I(t)$ at time $t$. The dynamics of  $\{ L(t), M(t), I(t) \}$ follow a system of differential equations:
\begin{align}
\frac{dL}{dt}(t) & = - \mu_L L(t) - \beta I(t) L(t) - \gamma L(t), \nonumber \\
\frac{dM}{dt}(t) & = \gamma L(t) - \mu_M M(t),~\mbox{and} \nonumber\\
\frac{dI}{dt}(t) & = \nu L(t) - \mu_I I(t)
\end{align} 
where $\mu_L$ is the natural death rate and $\gamma$ is the maturation rate of larvae; $\beta$ is the death rate of larvae due to the immune response from the host; $\mu_M$ is the death rate of mature parasites; $\nu$ is the acquisition rate and $\mu_I$ is the loss rate of immunity.

\citet{drovandi2011} propose a simplification of this model by applying a pseudoequilibrium assumption for immunity, such that the immunity is constant over time. Under this pseudoequilibrium assumption, the dynamics of $\{L(t), M(t)\}$  becomes
\begin{align}
\frac{dL}{dt}(t) & = - \mu_L L(t) - \eta [L(t)]^2 - \gamma L(t),~\mbox{and} \nonumber \\
\frac{dM}{dt}(t) & = \gamma L(t) - \mu_M M(t)
\label{eqn:deterministic_parasite}
\end{align} 
where $\eta = \beta \nu/\mu_I$. We illustrate the dynamic of \eqref{eqn:deterministic_parasite} in Figure \ref{fig:dynamic}. The corresponding stochastic formulation of this model is:
\begin{align}
&\Pr \left \{ \begin{array}{l | l} L(t+dt)~ = i-1  & L(t)= i \\ M(t+dt) = j+1~ & M(t) = j \end{array} \right \} =  (\gamma i) dt + o(dt), \nonumber \\
&\Pr \left \{ \begin{array}{l | l} L(t+dt)~ = i-1  & L(t)= i \\ M(t+dt) = j~~~~~~ & M(t) = j \end{array} \right \} =  (\mu_L i + \eta i^2) dt + o(dt), \nonumber \\
&\Pr \left \{ \begin{array}{l | l} L(t+dt)~ = i  & L(t)= i \\ M(t+dt) = j-1~ & M(t) = j \end{array} \right \} =  (\mu_M j) dt + o(dt),~\mbox{and} \nonumber \\
&\Pr \left \{ \begin{array}{l | l} L(t+dt)~ = i  & L(t)= i \\ M(t+dt) = j~~~~~~  & M(t) = j \end{array} \right \} =  1 - (\gamma i + \mu_L i + \eta i^2  + \mu_M j)dt + o(dt).
\end{align}
Notably, $\{L(t), M(t)\}$ follow a death/birth-death process. 

\begin{figure}[!h]
\begin{center}
\includegraphics[width=0.7\textwidth]{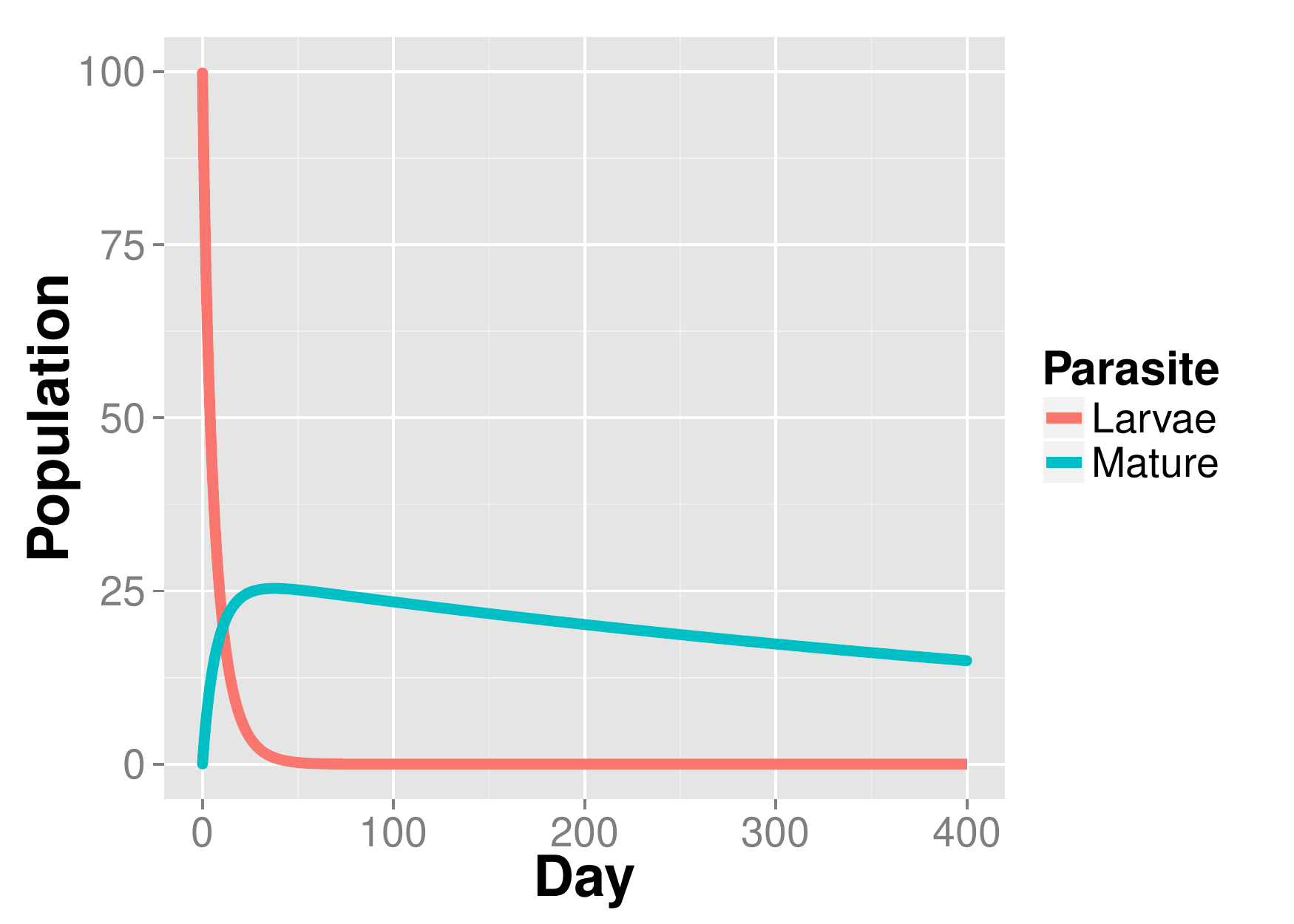}
\end{center}
\caption{The dynamic of $\{L(t), M(t)\}$ under the deterministic model \eqref{eqn:deterministic_parasite} with $\mu_L = 0.0682, \mu_M = 0.0015, \eta = 0.0009, \gamma = 0.04$ and $\{L(0), M(0)\} = \{100, 0\}$.}
\label{fig:dynamic}
\end{figure}

For this model, $\gamma$ and $\mu_M$ has been estimated at $0.04$ and $0.0015$ previously \citep[see][for more details]{drovandi2011}. To estimate the remaining parameters, \citet{drovandi2011} examine the number of mature parasites at host autopsy time (at most 400 days) of those injected with approximately $100$ juveniles,
assume \textit{a priori} $\mu_L$ and $\eta$ are uniform[0,1) and apply ABC to draw inference because the traditional matrix exponentiation method is computationally prohibitive here. The basic idea of ABC involves sampling from an approximate posterior distribution
\begin{equation}
f(\boldsymbol{\theta}, Y | \rho(Y,Y_s) \leq \epsilon) \propto f(Y_s | \boldsymbol{\theta}) \pi(\boldsymbol{\theta}) 1_{ \rho(Y,Y_s) \leq \epsilon},
\end{equation}
where $\boldsymbol{\theta}$ is the vector of unknown parameters, $\epsilon > 0$ is an \emph{ad hoc} tolerance, and $\rho(Y,Y_s)$ is a discrepancy measure between summary statistics of the observed data $Y$ and the simulated data $Y_s$. Because the sufficient statistics are not available for this problem, the authors use a goodness-of-fit statistic. However, the ABC method suffers from 
loss of information
because of non-zero tolerance and non-sufficient summary statistics \citep{sunnaaker2013approximate}. Therefore, credible intervals obtained by the ABC approach are potentially inflated \citep{csillery2010approximate}. 

In contrast, our method makes direct likelihood computation and in turn evaluation of the posterior density feasible. Figure \ref{fig:logliksurf} displays a visualization of the posterior density surface 
of  $(\log \mu_L, \log \eta)$ computed using our method,
given the collection of numbers of mature parasites $M(t)$ at autopsy under this model \citep[see][for more details about the data]{drovandi2011}.
Importantly in this example, we are able to efficiently integrate out the unobserved larvae counts $L(t)$ at autopsy. The approximate estimate obtained by \citet{drovandi2011} using ABC is overlaid on this density surface for comparison, and does not align with the highest density region of our computed posterior.
Note that the posterior is flat when $\eta$ is close to $0$, and has an unusual tail toward the region where the ABC estimate lies. This suggests that the previous ABC approach fails to explore the region with high posterior probability well, likely due to loss of information incurred by the method, resulting in a poor estimate from the data.

\begin{figure}[!h]
\begin{center}
\includegraphics[width=0.7\textwidth]{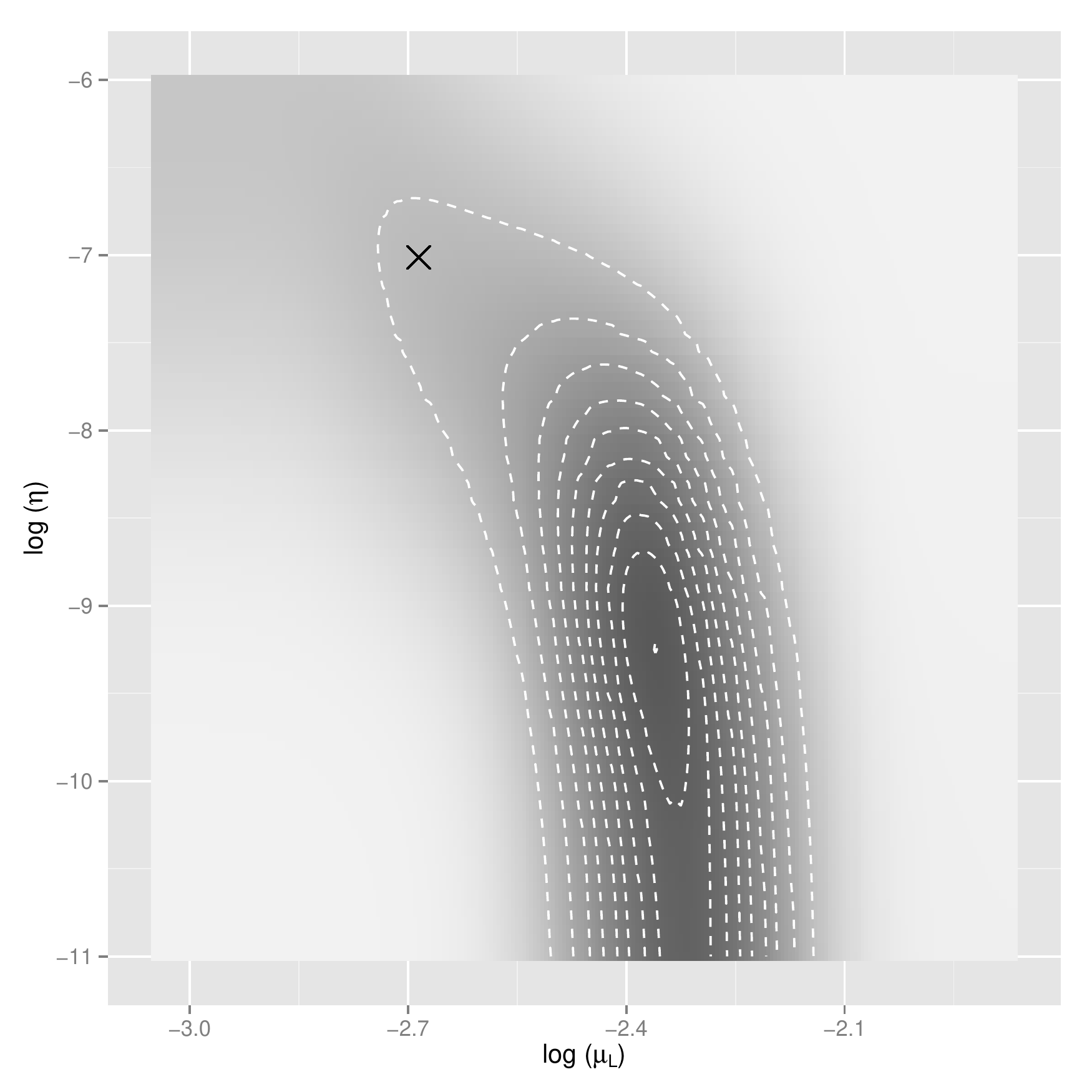}
\end{center}
\caption{Posterior density surface of $(\log \mu_L, \log \eta)$ under within-host macro-parasite model. The ``$\times$" symbol represents the estimate from  \citet{drovandi2011} using the ABC method.}
\label{fig:logliksurf}
\end{figure}

Finally, we consider this example toward a second runtime comparison between our method and 
\texttt{Expokit}, a state-of-the-art matrix exponentiation package with efficient implementation.
In particular, we compute the transition probability matrix $\{P^{100,0}_{ij}(t)\}_{0 \leq i \leq 100, 0 \leq j \leq 100}$ of $\{L(t), M(t)\}$ with $\mu_L = 0.0682, \mu_M = 0.0015, \eta = 0.0009, \gamma = 0.04$ at $t = 100, 200, 400$ using our function \texttt{dbd\_prob} and the function \texttt{expv} in \texttt{expoRkit}, an \texttt{R}-interface to the Fortran package \texttt{Expokit}.\
Both methods produce similar results: the $L_1$ distance between the two estimated transition probability matrices is less than $3 \times 10^{-9}$ across all cases.
In terms of speed, we see that \texttt{dbd\_prob} is roughly twice as fast as \texttt{expv} when $t=100, 200$, but about $9$-fold faster when $t=400$ (Figure \ref{fig:vs_expM}).
It is worth mentioning that \texttt{dbd\_prob} can be further accelerated via parallelization.

\begin{figure}[!h]
\begin{center}
\includegraphics[width=.4\textwidth]{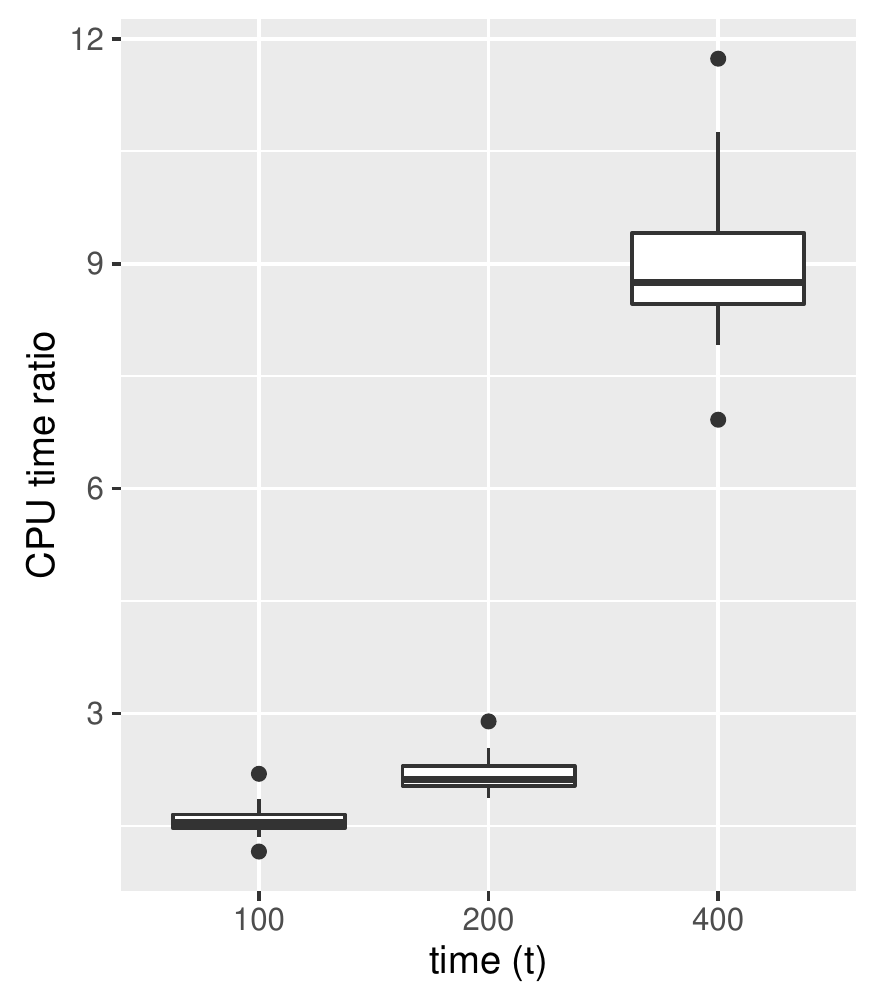}
\end{center}
\caption{CPU compute time ratios  of \texttt{expv} to \texttt{dbd\_prob} over 100 replications.}
\label{fig:vs_expM}
\end{figure}


\subsection{Stochastic SIR model in epidemiology}

\citet{mckendrick1926applications} models the spread of an infectious disease in a closed population by dividing the population into three categories: susceptible persons ($S$), infectious persons ($I$) and removed persons ($R$). Since the population is closed, the 
total population size $N$ obeys the conservation equation $N = S(t) + I(t) + R(t)$ for all time $t$. The deterministic dynamics of these three subpopulations follow a system of nonlinear ordinary differential equations \citep{Kermack1927}:
\begin{align}
\frac{dS}{dt}(t) &= - \beta S(t)I(t), \nonumber \\
\frac{dI}{dt}(t) &= \beta S(t)I(t) - \alpha I(t),~\mbox{and} \nonumber \\
\frac{dR}{dt}(t) &= \alpha I(t),
\end{align}
where $\alpha > 0$ is the removal rate and $\beta > 0$ is the infection rate of the disease. This system of equations cannot be solved analytically, but we can obtain its solution numerically. 
An important quantity for the SIR model is the basic reproduction number $R_0 = \beta N / \alpha$ \citep{earn2008light}. This quantity determines whether a spread of an infectious disease becomes an epidemic. In particular, an epidemic can only occur when $R_0 > 1$.

Unfortunately, the deterministic model is not suitable when the community is small \citep{britton2010}. In these situations, the original stochastic SIR model \citep{mckendrick1926applications} becomes more appropriate. Moreover, \citet{andersson2000stochastic} argue that stochastic epidemic models are preferable when their analysis is possible because (1) stochastics are the most natural way to describe a spread of diseases, (2) some phenomena do not satisfy the law of large numbers and can only be analyzed in the stochastic setting (for example, the extinction of endemic diseases only occurs when the epidemic process deviates from its expected value), and (3) quantifying the uncertainty in estimates requires stochastic models. Nonetheless, one can bypass Andersson and Britton's third argument by imposing random sampling errors around the deterministic compartments. Therefore, it is important to distinguish between the deterministic SIR model with sampling errors and the stochastic SIR model.

Without loss of generality, the stochastic SIR model needs only track $S(t)$ and $I(t)$ because $S(t) + I(t) + R(t)$ remains constant. All possible transitions of $\{S(t),I(t)\}$ during a small time interval $(t,t + dt)$ occur with probabilities
\begin{align}
& \Pr \left \{ \begin{array}{c | c} S(t+dt) = s~~~~~  & S(t)= s \\ I(t+dt) = i-1 & I(t) = i \end{array} \right \} = (\alpha i) dt + o(dt), \nonumber \\
& \Pr \left \{ \begin{array}{c | c} S(t+dt)= s - 1 & S(t)= s \\ I(t+dt) = i+1 & I(t) = i \end{array} \right \} = (\beta s i) dt + o(dt),~\mbox{and} \nonumber \\
& \Pr \left \{ \begin{array}{c | c} S(t+dt)= s~~~~~ & S(t)= s \\ I(t+dt) = i~~~~~ & I(t) = i \end{array} \right \} = 1 - (\alpha i + \beta s i)dt + o(dt).
\label{eqn:sir}
\end{align}
We see that $\{S(t),I(t)\}$ is a death/birth-death process with $\mu^{(1)}_{si} = \lambda^{(2)}_{s,i} = 0$, $\mu^{(2)}_{s,} = \alpha i $, $\gamma_{si} = \beta s i$. 

Due to the interaction between populations and nonlinear nature of the model, mechanistic analysis of the stochastic SIR model is difficult, and the lack of an expression for transition probabilities has been a bottleneck for statistical inference. \cite{renshaw2011} remarks 
that while one can write out the Kolmogorov forward equation for the system, the ``associated mathematical manipulations required to generate solutions can only be described as heroic." 
Instead, the majority of efforts involve either simulation based methods or simplifications and tractable approximations to the SIR model. For instance, the stochastic SIR model can be analyzed using ABC \citep{Mckinley2009},
but we have already mentioned limitations of this approach. 
Particle filter methods can be used to analyze SIR models within maximum likelihood \citep{Ionides2006, Ionides2015} and Bayesian frameworks \citep{Andrieu2010, Dukic2012}, 
but these methods are computationally very demanding and often suffer from convergence problems.
When examining large epidemics, to make the likelihood tractable it is reasonable to apply a continuous approximation to the large populations, modeled as a diffusion process with exact solutions \citep{cauchemez2008}. However, such an approach is a poor proxy for the SIR model when observed counts are low. When data are collected at regular intervals and coincide with disease generation timescales, it is also possible to study discrete-time epidemic models--- the time-series SIR (TSIR) model is one well-known example \citep{finkenstadt2000}. However, these simplifications also have their shortcomings, relying on the relatively strong assumption that populations are constant over each interval between observation times. 


In the death/birth-death framework, our method enables practical computation of these quantities \textit{without} any simplifying model assumptions. In Section \ref{sec:eyam}, we will apply our method to analyze the population of Eyam during the plague of 1666 \citep{raggett1982} to estimate the infection and the death rates of this disease, using the death/birth-death transition probabilities within a Metropolis-Hastings algorithm. Here, we first examine the accuracy of these transition probabilities themselves. We compare the continued fraction method to empirical transition probabilities obtained via simulation from the true model as ground-truth, and to a new two-type branching approximation to the SIR model introduced below. The branching process approximation is appropriate when transition probabilities need to be computed for short time intervals, and its simple expressions for transition probabilities enable much more efficient computation. However, we show that as transition time intervals increase, the branching approximation becomes less accurate, while the transition probabilities computed under the death/birth-death model remain very accurate.

While branching processes fundamentally rely on \textit{independence} of each member of the population, we can nonetheless make a fair approximation by mimicking the interaction effect of infection over short time intervals. 
In the branching model, let $X_1(t)$ denote the susceptible population and $X_2(t)$ denote the infected population at time $t$, with details and derivation included in Appendix \ref{ap:sir}. Over any time interval $[t_0, t_1)$, we use the initial population $X_2(0)$ as a constant scalar for the instantaneous rates.
This branching process model has instantaneous infection rate $\beta X_2(0) X_1(t)$ and recovery rate $\alpha X_2(t)$ for all $t \in [t_0, t_1)$, closely resembling the true SIR model rates, with the exception of fixing $X_2(0)$ in place of $X_2(t)$ in the rate of infection. This constant initial population fixes a piecewise homogeneous per-particle birth rate to satisfy particle independence while mimicking interactions, but notice that \textit{both} populations can change over the interval, offering much more flexibility than models such as TSIR that assume constant populations and rates between discrete observations.

This branching model admits closed-form solutions to the transition probabilities that can be evaluated quickly and accurately. The transition probabilities of the two-type branching approximation to the SIR model over any time interval of length $t$ are given by
\begin{align}
\label{eq:SIRtrans}  
\text{Pr} \left\{ \mb{X}(t+\tau) = (k,l) | \mb{X}(\tau) = (m,n) \right\} := P_{kl}^{mn}(t) = \sum_{i=0}^l {l \choose i} A(l-i) B(i),
\end{align} where
\begin{align}
\label{eq:conditionB}
B(i) &= 0 \text{ for all } i \geq n, \text{ otherwise}, \nonumber \\
B(i) &= \frac{n!}{(n-i)!} (1 - e^{-\alpha t})^{n-i} e^{-i \alpha t} 
\end{align}
and
\begin{align}
\label{eq:conditionA}
A(l-i) &= 0 \text{ for all } (l-i) \geq (m-k), \text{  otherwise}, \nonumber \\
A(l-i) &= \frac{m!}{(m-k-(l-i))!} e^{-k \beta n t} \left[ 1 - \frac{\beta n }{\beta n - \alpha} e^{-\alpha t} - \left(1 - \frac{ \beta n}{\beta n - \alpha} \right) e^{-\beta n t} \right] ^{m-k-(l-i)} \nonumber \\
& \quad \quad \times \left[ \frac{ \beta n}{\beta n - \alpha} ( e^{- \alpha t} - e^{-\beta n t} ) \right]^{l-i}.  
\end{align}
The sum over products of expressions \eqref{eq:conditionB} and \eqref{eq:conditionA} in equation \eqref{eq:SIRtrans} may look unwieldy, but this sum is computed extremely quickly with a vectorized implementation, and with high degrees of numerical stability. 
In settings when such a model is appropriate and $(X_1(t), X_2(t)) \approx (S(t), I(t) )$, the branching approximation can offer a much more computationally efficient alternative to the continued fraction method. 

\subsection{Transition probabilities of the SIR model}
\begin{figure}
\centering 
    \includegraphics[scale=1.0]{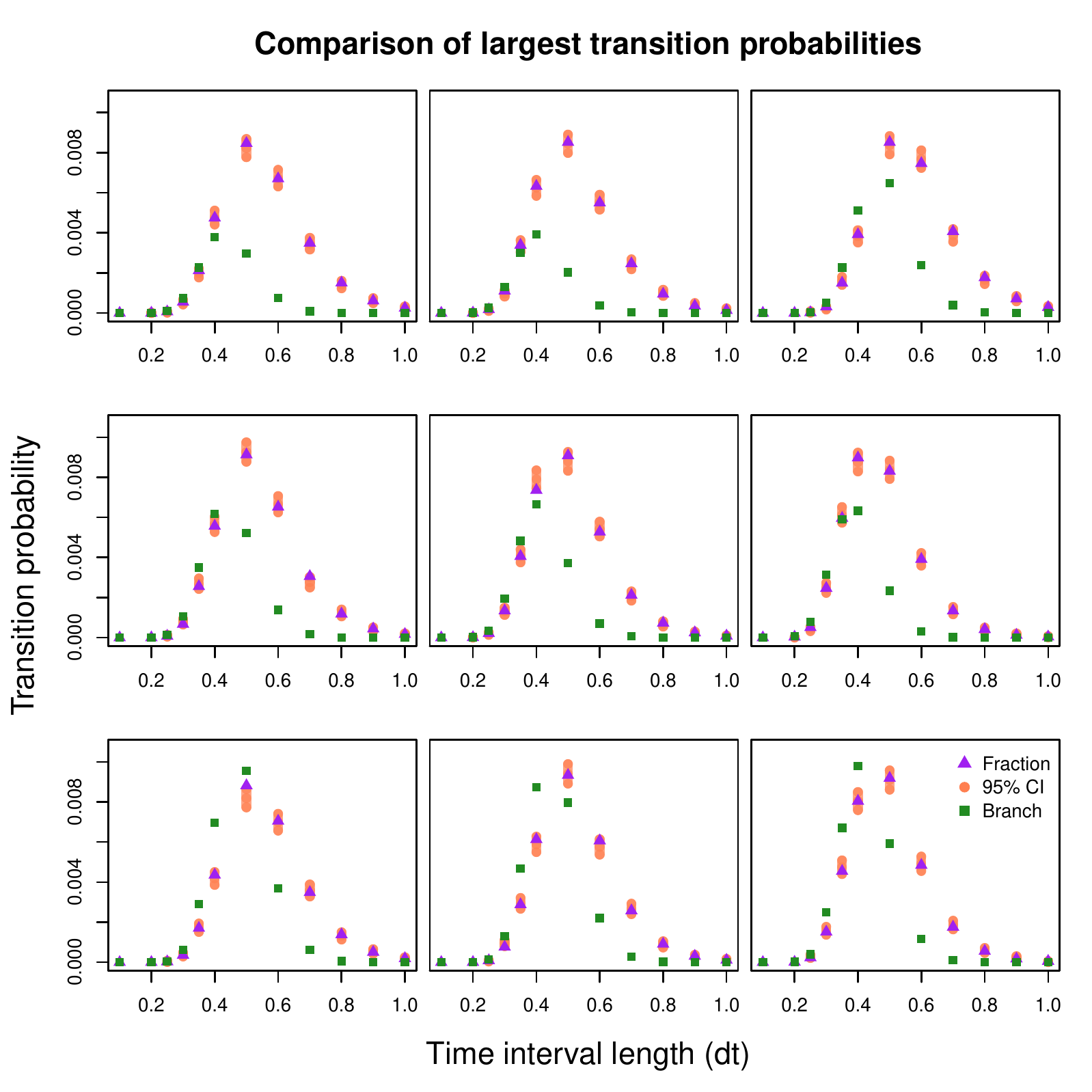}
   \caption{ The plot above displays the values of the nine largest transition probabilities when $t=0.5$ as we vary $t$ from $0.1, \ldots, 1.0$. Parameters used to generate data are initialized at $I_0 = 15, S_0 = 110, \alpha = 3.2, \beta = 0.025$.  Empirical Monte Carlo $95\%$ confidence intervals over $150,000$ simulations from the true model are depicted in orange. Probabilities computed using the continued fraction expansion are depicted by purple triangles, while probabilities computed under the branching approximation are denoted by green squares.} 
   \label{fig:transCompare}
\end{figure}

Figure \ref{fig:transCompare} provides a comparison between methods of computing transition probabilities. Included are transition probabilities corresponding to the nine pairs of system states $\left\{ (m,n),(k,l) \right\}_j$, $j = 1, \ldots, 9$, such that $P_{kl}^{mn}(0.5)$ is largest. Fixing these indices, we plot the set of probabilities $\left\{P_{kl}^{mn}(t)\right\}$ while varying $t$ between $0.1$ and $1.0$. We see that transition probabilities computed using the continued fraction method under the death/birth-death model very closely match those computed empirically via simulation from the model, taken to be the ground truth.  Almost all such probabilities in Figure \ref{fig:transCompare} fall within the 95\% confidence interval, while the branching process transitions follow a similar shape over time, but fall outside of the confidence intervals for many observation intervals. An additional heatmap visualization comparing the support of transition probabilities is included in the Appendix, and shows that the branching approximation is accurate with similar support to the empirical transition probabilities for a shorter time interval of length $t = 0.5$, but becomes visibly further from the truth when we increase the observation length to $t=1.0$.

\section{The Plague in Eyam revisited}
\label{sec:eyam}
We revisit the outbreak of plague in Eyam, a village in the Derbyshire Dales district, England,  over the period from June 18th to October 20th, 1666. This plague outbreak is widely accepted to originate from the Great Plague of London, that killed about $15\%$ of London's population at that time. To prevent further spread of the plague after infestation, the Eyam villagers did not escape the village, instead isolating themselves from the outside world. At the end of this horrific event, only 83 people had survived out of an initial population of 350. We summarize data recording the spread of the disease \citep{raggett1982} in Table \ref{tab:eyam}. As mentioned in \citet{raggett1982}, this data are obtained by counting the number of deaths from the dead list and estimating the infective population from the list of future deaths assuming a fixed length of illness prior to death. Then, the susceptible population can be computed easily because the the town is isolated.
\begin{table}[h]
\begin{center}
\begin{tabular}{ lcccccccc }
\cline{2-9}
& \multicolumn{8}{c}{Time (months)} \\
\cline{2-9}
& 0 & 0.5 & 1 & 1.5 & 2 & 2.5 & 3 & 4 \\
\hline
Susceptible population & 254 & 235 & 201 & 153 & 121 & 110 & 97 & 83 \\
Infective population & 7 & 14 & 22 & 29 & 20 & 8 & 8 & 0 \\
\hline
\end{tabular}
\end{center}
\caption{Susceptible and infectious population size in Eyam from June 18th to October 20th, 1666.}
\label{tab:eyam}
\end{table}

\noindent
\citet{raggett1982} analyzes these data using the stochastic SIR model (\ref{eqn:sir}). In this model, $\alpha$ is the unknown death rate of infective people and $\beta$ is the unknown infection rate of the plague. The author uses a simple approximation method for the forward differential equation and comes up with a point estimate $(\hat \alpha, \hat \beta) = (3.39, 0.0212)$. We take a Bayesian approach to re-analyze these data.

With $n$ observations $\{ (s_k, i_k) \}_{k=1}^n$ at time $\{ t_k \}_{k=1}^n$, the log of the likelihood function is:
\begin{equation}
\log l \left( \alpha, \beta \big | \{ (s_k, i_k) \}_{k=1}^n \right ) = \sum_{k=1}^{n-1}{ \log \Pr \left \{ \begin{array}{l | l} S(t_{k+1}) = s_{k+1}  & S(t_k) = s_k \\ I(t_{k+1}) = i_{k+1} & I(t_k) = i_k \end{array} \right \} }.
\label{eqn:logllh}
\end{equation}
Because $\{S(t),I(t)\}$ is a death/birth-death process, 
the individual transition probabilities
can be computed efficiently using our continued fraction method. Hence, the log of the likelihood (\ref{eqn:logllh}) can be computed easily. Since $\alpha$ and $\beta$ are non-negative, we opt to use $\log \alpha$ and $\log \beta$ as our model parameters and assume \textit{a priori} that $\log \alpha \sim {\cal N}(\mu = 0, \sigma = 100)$ and $\log \beta \sim {\cal N}(\mu = 0, \sigma = 100)$. We explore the posterior distribution of $(\log \alpha, \log \beta)$ using a random-walk Metropolis algorithm implemented in the \texttt{R} function \texttt{MCMCmetrop1R} from package \texttt{MCMCpack} \citep{martin2011}. We start the chain from Raggett's estimated value $(\log(3.39), \log(0.0212))$ and run it for $100000$ iterations. We discard the first $20000$ iterations and summarize the posterior distribution of $(\alpha, \beta)$ using the remaining iterations. We illustrate the density of this posterior distribution in Figure \ref{fig:density}. The posterior mean of $\alpha$ is $3.22$ and the $95\%$ Bayesian credible interval for $\alpha$ lies in $(2.69, 3.82)$. Those corresponding quantities for $\beta$ are $0.0197$ and $(0.0164, 0.0234)$. Notice that our credible intervals include the point estimate $(\hat \alpha, \hat \beta) = (2.73, 0.0178)$ from \citet{brauer2008compartmental} using the deterministic SIR model and Raggett's point estimate $(\hat \alpha, \hat \beta) = (3.39, 0.0212)$. 

We also apply the two-type branching approximation to compute the log of the likelihood (\ref{eqn:logllh}). Using the same random-walk Metropolis algorithm as before, we explore the posterior distribution of $(\alpha, \beta)$ and visualize it in Figure \ref{fig:TransProb_density}. The posterior mean of $\alpha$ is $3.237$ and the $95\%$ Bayesian credible interval for $\alpha$ is $(2.7, 3.84)$, while those quantities for $\beta$ are $0.02$ and $(0.0171, 0.023)$. Although the posterior means and the $95\%$ Bayesian credible intervals are similar to ones from the continued fraction method, we see in Figure \ref{fig:TransProb_density} that this method fails to fully capture the posterior correlation structure between $\alpha$ and $\beta$.

\begin{figure}[!h]
\subfigure[Continued fraction method]{
\includegraphics[width=0.5\textwidth]{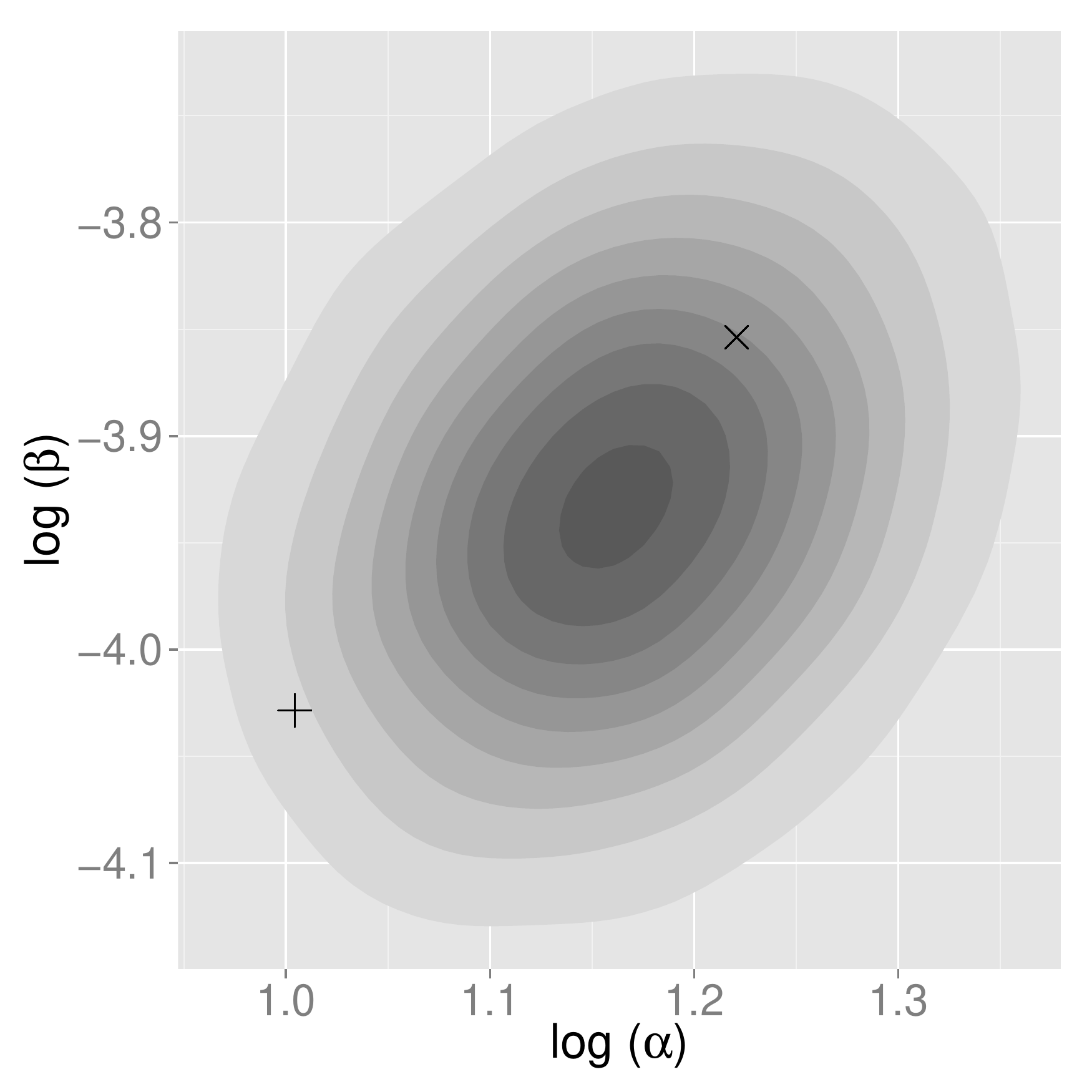}
\label{fig:density}
}
\subfigure[Branching approximation method]{
\includegraphics[width=0.5\textwidth]{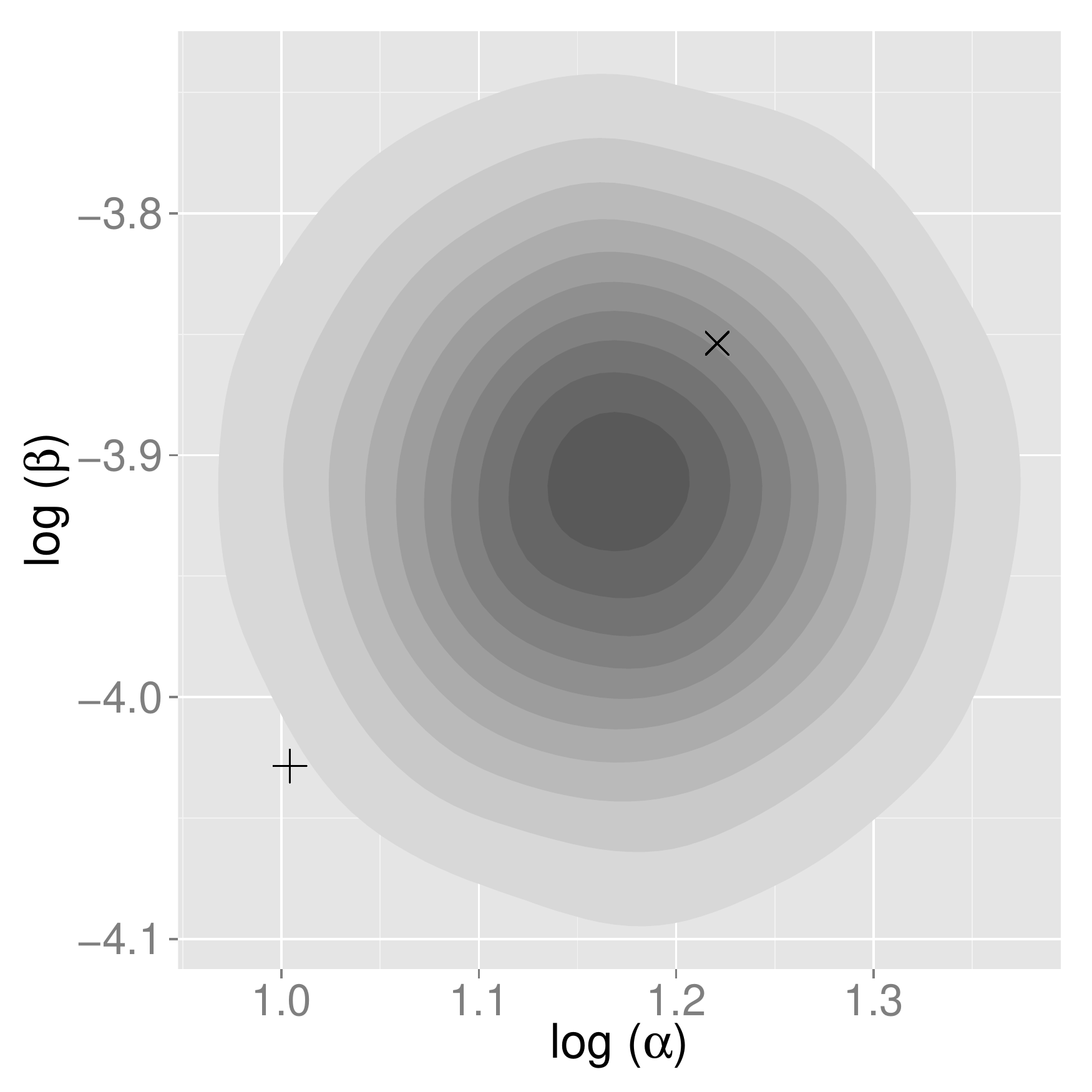}
\label{fig:TransProb_density}
}
\caption{Posterior distributions (log scale) of the death rate $\alpha$ and the infection rate $\beta$ during the plague of Eyam in 1666. The ``$+$" symbol represents the estimate from \citet{brauer2008compartmental} using the deterministic SIR model, and the ``$\times$" symbol represents the Raggett's point estimate.} 
\end{figure}


The posterior distribution of the basic reproduction number $R_0$ from the continued fraction method and from the branching approximation method are similar (Figure \ref{fig:reproduction}). The posterior mean of $R_0$ from the continued fraction method is $1.61$ and from the branching approximation method is $1.62$. The estimate for $R_0$ from \citet{brauer2008compartmental} is $1.7$, from \citet{raggett1982} is $1.63$. These estimates are similar, and in particular the branching approximation estimate is very close to that under the continued fraction method, offering a very efficient way to provide reasonable estimates of quantities such as $R_0$ despite being less accurate than the continued fraction approach.
\begin{figure}[!h]
\begin{center}
\includegraphics[width=0.5\textwidth]{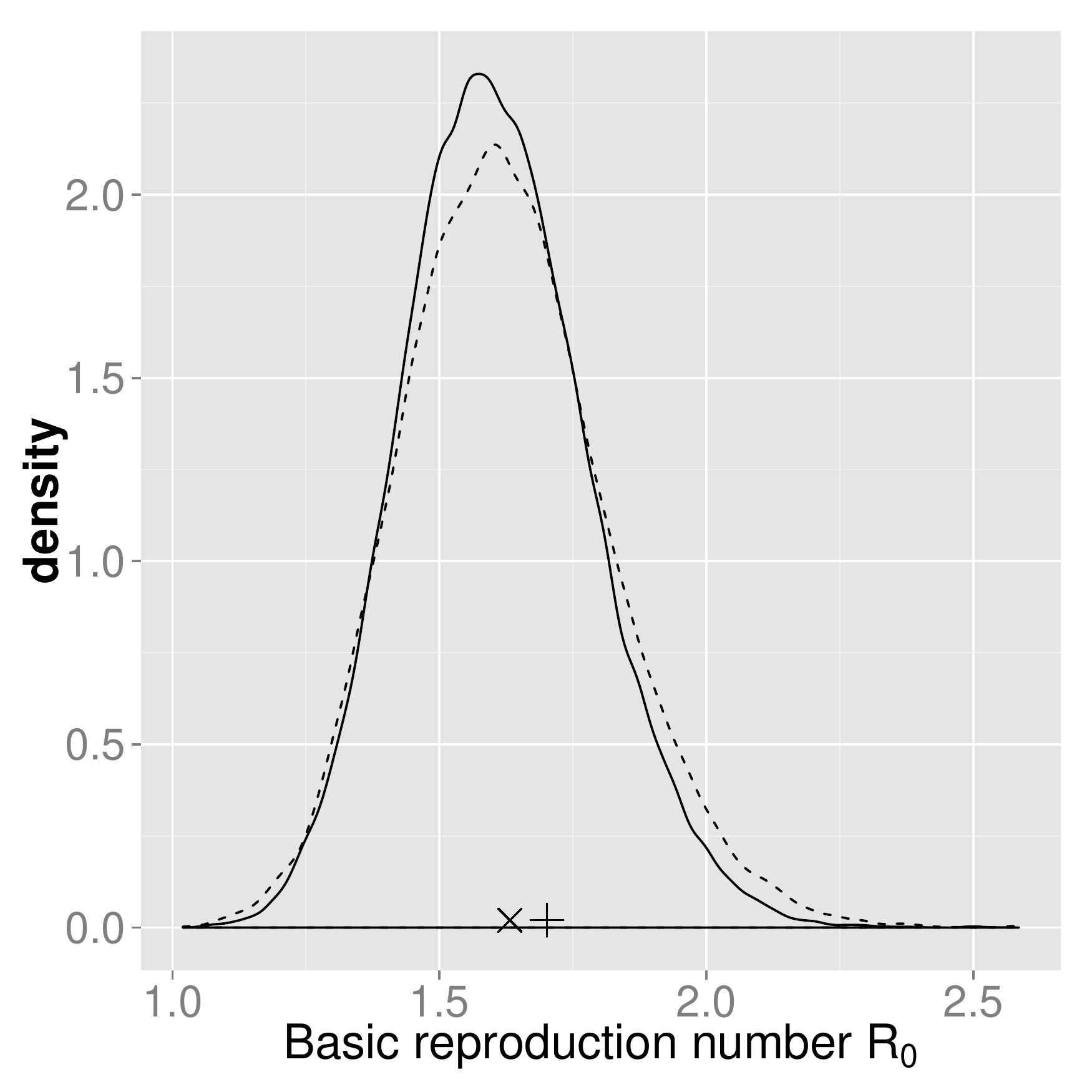}
\end{center}
\caption{Posterior distribution of the basic reproduction number $R_0$ (solid line: continued fraction method, dashed line: branching approximation method). The ``$+$", and the ``$\times$" symbols represent the estimate of $R_0$ from \citet{brauer2008compartmental}, and from \citet{raggett1982} respectively.} 
\label{fig:reproduction}
\end{figure}

From the results, we can see that estimates of $R_0$ from different methods are roughly the same while estimates of $\alpha$ and $\beta$ are different. Although the basic reproduction number $R_0$ is an important quantity in the SIR model, it is not the only parameter driving the dynamic of the epidemic. \citet{correia2014transmission} demonstrated the important of accurately estimating the transmission parameters between compartments of the SIR model for \textit{Salmonella} Typhimurium in pigs.


\newcommand{\MSE}{\text{MSE}}

\section{Discussion}

Likelihood-based inference for bivariate continuous-time Markov processes is usually restricted to very small state spaces due to the computational bottleneck of transition probability calculation. 
In this paper, we provide tools for likelihood-based inference for birth(death)/birth-death processes by developing an efficient method to compute their transition probabilities. 
We provide a complete implementation of the algorithms to compute these transition probabilities in a new \texttt{R} package called \texttt{MultiBD}. 
Our functions employ sophisticated tools
including continued fractions, the modified Lentz method, the method of Abate and Whitt for approximate inverse Laplace transforms, and the Levin acceleration method. Moreover, these methods are naturally amenable to parallelization, and we exploit multicore processing to speed up the algorithm.
We remark that birth(death)/birth-death processes remain a limited subclass of general multivariate birth-death processes. For example, many population biology problems require a full bivariate birth-death process including predator-prey models \citep{hitchcock1986, owen2014} and the SIR model with vital dynamics \citep{earn2008light}. Unfortunately, efficiently computing the transition probabilities of multivariate birth-death processes remains an open problem. Solving this problem will enable numerically stable statistical inference under birth-death processes and will be worth the ``heroic'' effort \citep{renshaw2011}.


\section*{Acknowledgments}

This work was partially supported by the National Institutes of Health (R01 HG006139, R01 AI107034, and U54 GM111274) and the National Science Foundation (IIS 1251151, DMS 1264153, DMS 1606177 ).
We thank Christopher Drovandi, Edwin Michael, and David Denham for access to the \textit{Brugia pahangi} count data.


\numberwithin{equation}{section}
\appendix

\section{Continued fractions}
\label{sec:cf}

In this section, we give some basic definitions and properties related to continued fractions.
\begin{appxdef}
A continued fraction $\phi_0$ is a scalar quantity expressed in
\begin{equation}
\phi_0 =  \cfrac{x_1}{
y_1 + \cfrac{x_2}{
y_2 + \cfrac{x_3}{
y_3 + \cdots
}~,}}
\label{eqn:cf}
\end{equation}
where $\{ x_i \}_{i=1}^\infty$ and $\{ y_i \}_{i=1}^\infty$ are infinite sequences of complex numbers.
\end{appxdef}

\begin{appxdef}
The $n^{\mbox{\tiny th}}$ convergent of $\phi_0$ is
\begin{equation}
\frac{X_n}{Y_n} = \cfrac{x_1}{
y_1 + \cfrac{x_2}{
y_2 + \cfrac{x_3}{
y_3 + \cdots + \cfrac{x_n}{y_n
}~.}}}
\end{equation}
\end{appxdef}

\begin{appxdef}
We define the corresponding sequence $\{\phi_n\}_{n=0}^\infty$ of a continued fraction (\ref{eqn:cf}) by the following recurrence formulae
\begin{equation} 
\begin{aligned}
& \phi_1 = x_1 - y_1\phi_0,~\mbox{and} \\
& \phi_n = x_n \phi_{n-2} - y_n \phi_{n-1}~\mbox{for}~n \geq 2. 
\end{aligned} 
\end{equation} 
\label{def:corrseq}
\end{appxdef}

\citet{murphy1975} provided the following sufficient condition for the convergence of (\ref{eqn:cf}):
\begin{appxlem}
Assume that there exists $N$ such that $\inf_{n >N}|Y_n| > 0$ and $\lim_{n \to \infty} \phi_n = 0$. Then, the continued fraction (\ref{eqn:cf}) is convergent. Moreover,
\begin{equation}
\phi_n = \prod_{i=1}^{n}{x_i} \cfrac{x_{n+1}}{
Y_{n+1} + \cfrac{x_{n+2} Y_n}{
y_{n+2} + \cfrac{x_{n+3}}{
y_{n+3} + \cfrac{x_{n+4}}{
y_{n+4} + \cdots
}~.}}}
\end{equation}
\label{lem:cf}
\end{appxlem}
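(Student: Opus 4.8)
The plan is to connect the corresponding sequence $\{\phi_n\}$ to the error of the convergents $X_n/Y_n$ and then read convergence off directly. First I would recall the standard three-term recurrences for the numerators and denominators of the convergents, $X_n = y_n X_{n-1} + x_n X_{n-2}$ and $Y_n = y_n Y_{n-1} + x_n Y_{n-2}$, with the usual initial values $X_{-1}=1$, $X_0=0$, $Y_{-1}=0$, $Y_0=1$. Setting $e_n := X_n - \phi_0 Y_n$, linearity gives $e_n = y_n e_{n-1} + x_n e_{n-2}$ with $e_{-1}=1$ and $e_0 = -\phi_0$. A short induction using Definition \ref{def:corrseq} then yields the key identity $\phi_n = (-1)^{n+1} e_n = (-1)^{n+1}(X_n - \phi_0 Y_n)$; this is essentially the whole reason the corresponding sequence is defined the way it is.

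Granting this identity, the convergence claim is immediate, since
\begin{equation}
\left| \frac{X_n}{Y_n} - \phi_0 \right| = \frac{|e_n|}{|Y_n|} = \frac{|\phi_n|}{|Y_n|},
\end{equation}
and the hypotheses $\lim_{n\to\infty}\phi_n = 0$ and $\inf_{n>N}|Y_n| > 0$ force the right-hand side to $0$. Hence $X_n/Y_n \to \phi_0$, i.e. the continued fraction (\ref{eqn:cf}) converges to the prescribed value $\phi_0$.

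For the ``moreover'' formula I would first simplify the nested fraction on the right. Writing $g_{m} = \cfrac{x_m}{y_m + \cfrac{x_{m+1}}{y_{m+1}+\cdots}}$ for the tail starting at level $m$, I observe that the sub-fraction $\cfrac{x_{n+2}Y_n}{y_{n+2}+\cfrac{x_{n+3}}{y_{n+3}+\cdots}}$ equals $Y_n g_{n+2}$, so the entire right-hand continued fraction collapses to $\cfrac{x_{n+1}}{Y_{n+1}+Y_n g_{n+2}}$. Next I would use two classical facts: the tail relation $g_{n+2} = -e_{n+1}/e_n$, which follows from $\phi_0 = (X_n + g_{n+1}X_{n-1})/(Y_n + g_{n+1}Y_{n-1})$ together with the definition of $e_n$, and the determinant identity $X_n Y_{n-1} - X_{n-1}Y_n = (-1)^{n+1}\prod_{i=1}^{n}x_i$, obtained from the one-line recursion $D_n = -x_n D_{n-1}$. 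Substituting $Y_{n+1}=y_{n+1}Y_n + x_{n+1}Y_{n-1}$ and $g_{n+2}=-e_{n+1}/e_n$, and using $e_{n+1}=y_{n+1}e_n + x_{n+1}e_{n-1}$, telescopes the denominator to $Y_{n+1}+Y_n g_{n+2} = (-1)^{n+1}\left(\prod_{i=1}^{n+1}x_i\right)/e_n$, so that $\cfrac{x_{n+1}}{Y_{n+1}+Y_n g_{n+2}} = e_n/\big((-1)^{n+1}\prod_{i=1}^n x_i\big)$. Multiplying by $\prod_{i=1}^n x_i$ recovers $(-1)^{n+1}e_n = \phi_n$, as claimed.

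The main obstacle I anticipate is not the algebra of the last paragraph but justifying that the tail fractions $g_{n+2}$ (and the modified fraction on the right) genuinely converge, so that manipulations with $g_{n+2}=-e_{n+1}/e_n$ are legitimate rather than merely formal. Since $e_n = (-1)^{n+1}\phi_n \to 0$ is a nontrivial solution of $e_n = y_n e_{n-1}+x_n e_{n-2}$ dominated by $Y_n$, it is the minimal (recessive) solution, and convergence of the tails is exactly the content of Pincherle's theorem under the stated hypotheses. I would therefore insert this minimal-solution argument as the bridge, after which the determinant computation finishes the proof; alternatively one can sidestep the tail-convergence subtleties by carrying out the same identities on the finite convergents of the right-hand fraction and passing to the limit using $\inf_{n>N}|Y_n|>0$.
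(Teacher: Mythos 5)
The paper never proves this lemma itself: it is imported verbatim from \citet{murphy1975}, so there is no internal argument to compare yours against, and your proof must stand on its own. It does, and it is essentially the standard difference-equation argument that underlies the cited result. Your key identity $\phi_n = (-1)^{n+1}e_n = (-1)^{n+1}(X_n - \phi_0 Y_n)$ is correct (both sides satisfy the same three-term recurrence, with matching seeds $e_{-1}=1$, $e_0 = -\phi_0$), and combined with $\inf_{n>N}|Y_n|>0$ and $\phi_n \to 0$ it gives $|X_n/Y_n - \phi_0| = |\phi_n|/|Y_n| \to 0$, i.e., convergence of (\ref{eqn:cf}) to the seed value $\phi_0$ --- exactly the sense in which the paper invokes the lemma. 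The algebra for the tail formula also checks out: the displayed fraction collapses to $x_{n+1}/\left(Y_{n+1} + Y_n g_{n+2}\right)$, and with $g_{n+2} = -e_{n+1}/e_n$ one computes $Y_{n+1}e_n - Y_n e_{n+1} = Y_{n+1}X_n - Y_n X_{n+1} = (-1)^{n+1}\prod_{i=1}^{n+1}x_i$, whence the right-hand side equals $(-1)^{n+1}e_n = \phi_n$. Your instinct that convergence of the tails is the one genuine gap is right, but the fix is simpler than Pincherle's theorem and is exactly your own fallback, which should be promoted to the primary argument: the $k$-th convergent of $g_{n+2}$ equals $-\bigl(X_{n+1} - r_k Y_{n+1}\bigr)/\bigl(X_n - r_k Y_n\bigr)$ with $r_k = X_{n+k+1}/Y_{n+k+1}$, and since $r_k \to \phi_0$ by the first part of your proof, the tail converges to $-e_{n+1}/e_n$ with no appeal to minimal-solution theory. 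Two caveats you should state explicitly: the argument implicitly assumes every $x_i \neq 0$ (otherwise the fraction terminates and the product in the formula degenerates), and the division by $e_n$ requires $\phi_n \neq 0$, so at any index where a convergent happens to hit $\phi_0$ exactly the tail expression must be read in a limiting sense. Neither caveat affects the way Lemma \ref{lem:cf} is used in the body of the paper.
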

Now, if we consider a more general recurrence formulae
\begin{equation}
\begin{aligned}
& \phi^{(m)}_1 = - y_1\phi^{(m)}_0 + k_1 1_{\{m = 0\}} \\
& \phi^{(m)}_n = x_n \phi^{(m)}_{n-2} - y_n \phi^{(m)}_{n-1} + k_{m+1} 1_{\{m = n - 1\}}~\text{for}~n \geq 2,
\label{eqn:gen_rec}
\end{aligned} 
\end{equation}
then under the assumption of Lemma \ref{lem:cf}, we have the following lemma:

\begin{appxlem}
The solution for (\ref{eqn:gen_rec}) is
\begin{equation}
\phi^{(m)}_n = \begin{cases} \frac{(-1)^{m-n}k_{m+1}}{\prod_{i=1}^{m+1}{x_i}} Y_n \phi_m, & \mbox{if } n \leq m \\  \frac{k_{m+1}}{\prod_{i=1}^{m+1}{x_i}} Y_m \phi_n, & \mbox{if } n \geq m. \end{cases}
\end{equation}
\label{lem:sys}
\end{appxlem}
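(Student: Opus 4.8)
The plan is to read (\ref{eqn:gen_rec}) as a second-order (three-term) linear recurrence in $n$ whose only inhomogeneity is a single ``impulse'' $k_{m+1}$ sitting at the index $n=m+1$; everywhere else the recurrence is homogeneous and coincides with the one defining the corresponding sequence $\{\phi_n\}$. I would therefore treat the claimed closed form as an ansatz, verify that it satisfies (\ref{eqn:gen_rec}) at every index $n \ge 1$, and then invoke uniqueness of the decaying solution to conclude. The natural extra object to bring in is the denominator sequence $\{Y_n\}$ of the convergents, which supplies the ``second'' fundamental solution that the formula uses below the impulse.

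First I would record the two fundamental solutions of the homogeneous recurrence $u_n = x_n u_{n-2} - y_n u_{n-1}$. By Definition \ref{def:corrseq}, $\{\phi_n\}$ is one of them. For the other, recall that the denominators satisfy $Y_n = y_n Y_{n-1} + x_n Y_{n-2}$ with $Y_{-1}=0$ and $Y_0=1$; a one-line sign change shows that $(-1)^n Y_n$ obeys exactly $u_n = x_n u_{n-2} - y_n u_{n-1}$. Since the first branch of the claimed formula is proportional to $(-1)^n Y_n$ (the factor $(-1)^{m-n}=(-1)^m(-1)^n$) and the second branch is proportional to $\phi_n$, the formula is assembled from the growing solution for $n\le m$ and the decaying solution for $n\ge m$, glued at $n=m$ where the two branches visibly agree.

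Next I would verify the recurrence branch by branch. For $2\le n\le m$ and for $n\ge m+2$ there is no forcing, and since each branch is a constant multiple of a homogeneous solution the recurrence holds automatically (the three indices $n,n-1,n-2$ all stay within one branch). The base equation at $n=1$ is a short direct check, splitting the cases $m=0$ and $m\ge 1$ and using $Y_0=1$, $Y_1=y_1$. The crux is the single impulse equation at $n=m+1$, namely $\phi^{(m)}_{m+1} = x_{m+1}\phi^{(m)}_{m-1} - y_{m+1}\phi^{(m)}_m + k_{m+1}$. Substituting the two branches and using $Y_{m+1}=y_{m+1}Y_m+x_{m+1}Y_{m-1}$, this collapses to the Casoratian-type identity $Y_m\phi_{m+1}+Y_{m+1}\phi_m=\prod_{i=1}^{m+1}x_i$. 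I expect this identity to be the main obstacle, but it yields to induction: writing $W_m=Y_m\phi_{m+1}+Y_{m+1}\phi_m$ and feeding in both three-term recurrences gives $W_m=x_{m+1}W_{m-1}$, while $W_0=Y_0\phi_1+Y_1\phi_0=x_1$ matches $\prod_{i=1}^{1}x_i$, so $W_m=\prod_{i=1}^{m+1}x_i$ for every $m$. (For $m=0$ this base case is precisely the impulse equation, so the two checks coincide.)

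Finally I would establish uniqueness so that the verified candidate is \emph{the} solution. Any two solutions of (\ref{eqn:gen_rec}) differ by a homogeneous solution $w_n=a\phi_n+b(-1)^nY_n$. The $n=1$ equation of (\ref{eqn:gen_rec}) encodes the boundary value $\phi^{(m)}_{-1}=0$, so $w_{-1}=0$; since $\phi_{-1}=1$ while $(-1)^{-1}Y_{-1}=0$, this forces $a=0$. The standing assumptions of Lemma \ref{lem:cf} give $\phi_n\to 0$ and $\inf_{n>N}|Y_n|>0$, so the candidate decays while $(-1)^nY_n$ does not; imposing $w_n\to 0$ then forces $b=0$. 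Hence $w\equiv 0$, the decaying solution is unique, and it equals the claimed closed form.
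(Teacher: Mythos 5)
Your proof is correct, but there is nothing in the paper to compare it against: Lemma \ref{lem:sys} is stated in Appendix \ref{sec:cf} as a bare assertion following Lemma \ref{lem:cf}, with no proof given. What you have written supplies the missing justification, and the two ingredients you isolate are exactly the right ones. First, the Casoratian-type identity $Y_m\phi_{m+1}+Y_{m+1}\phi_m=\prod_{i=1}^{m+1}x_i$, proved via $W_m=x_{m+1}W_{m-1}$ with $W_0=Y_0\phi_1+Y_1\phi_0=x_1$, is precisely what makes the impulse equation at $n=m+1$ yield the normalizing constant $k_{m+1}/\prod_{i=1}^{m+1}x_i$ appearing in the claimed formula; I checked the induction step and the branch-by-branch verification (including the gluing at $n=m$ and the $n=1$ base case split on $m=0$ versus $m\geq 1$), and they all go through. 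Second, your uniqueness step is genuinely necessary and is the part the paper leaves entirely implicit: the system (\ref{eqn:gen_rec}) by itself has a one-parameter family of solutions (free choice of $\phi^{(m)}_0$ determines everything else), so the lemma can only be read as characterizing the \emph{decaying} solution; your use of $\phi_n\to 0$ and $\inf_{n>N}|Y_n|>0$ to eliminate the $(-1)^nY_n$ component is exactly how the paper's phrase ``under the assumption of Lemma \ref{lem:cf}'' must enter, and it matches how the lemma is invoked in the main text, where the sequences $f_{a_0b}(s)$ are Laplace transforms of probabilities summing to $1/s$ and hence vanish at infinity. One small point to make explicit: your span claim $w_n=a\phi_n+b(-1)^nY_n$ for homogeneous solutions, and the extension $\phi_{-1}=1$, both require $x_1\neq 0$ (and the statement itself needs all $x_i\neq 0$, since $\prod_{i=1}^{m+1}x_i$ sits in a denominator); this is harmless because it is forced by the lemma being well-posed and holds in the paper's applications, where $x_{a1}=-1/\mu^{(2)}_{a1}$ or $x_{a1}=1$ after re-parametrization, but it deserves a sentence.
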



\section{Modified Lentz method}
\label{sec:lentz}
Modified Lentz method \citep{lentz1976, thompson1986} is an efficient algorithm to finitely approximate the infinite expression of the continued fraction $\phi_0$ in (\ref{eqn:cf}) to within a prescribed error tolerance. Let $\phi_0^{(n)}$ be the $n^{\mbox{\tiny th}}$ convergence of $\phi_0$, that is $ \phi_0^{(n)} = X_n/Y_n.$ The main idea of Lentz's algorithm lies in using the ratios
\begin{equation}
A_n = \frac{X_n}{X_{n-1}} ~~ \mbox{and} ~~ B_n = \frac{Y_{n-1}}{Y_n}
\end{equation}
to stabilize the computation of $\phi_0^{(n)}$. We can calculate $A_n$, $B_n$, and $\phi_0^{(n)}$ recursively as follows:
\begin{equation}
\begin{aligned}
A_n &= y_n + \frac{x_n}{A_{n-1}} \\
B_n &= \frac{1}{y_n + x_n B_{n-1}} \\ 
\phi_0^{(n)} &= \phi_0^{(n-1)} A_n B_n.
\end{aligned}
\end{equation}
If $\phi_0^{(n)}$ converges to $\phi_0$, then \citet{craviotto1993} show that

\begin{equation}
\left | \phi_0^{(n)} - \phi_0 \right | \leq \frac{|Y_n/Y_{n-1}|}{{\cal I} [Y_n/Y_{n-1}]} \left| \phi_0^{(n)} - \phi_0^{(n-1)} \right | = \frac{|1/B_n|}{{\cal I} [1/B_n]} \left | \phi_0^{(n)} - \phi_0^{(n-1)} \right |,
\end{equation}
where ${\cal I} [Y_n/Y_{n-1}]$ is the imaginary part of $Y_n/Y_{n-1}$ and is assumed to be non-zero. Hence, the Lentz's algorithm terminates when
\begin{equation}
\frac{|1/B_n|}{{\cal I} [1/B_n]} \left | \phi_0^{(n)} - \phi_0^{(n-1)} \right |
\end{equation}
is small enough. However, $A_n$ and $B_n$ can equal zero themselves and cause problem. Hence, \citet{thompson1986} propose a modification for Lentz's algorithm by setting $A_n$ and $B_n$ to a very small number, such as $10^{-16}$, whenever they equal zero. In practice, the algorithm often terminates after small number of iterations. However, in some rare cases where the numerical computation is unstable, it might take too long before the algorithm terminates. So, we set a predefined maximum number of iterations $H$ as a fallback for these cases.


\section{Convergence results of increasing the truncation level}
\label{sec:conv_trunc}

Let $f_{ab}^{(B)}(s)$ be the output of the approximation scheme (\ref{eqn:trunc}) in Theorem \ref{thm:trans}.
In this section, we prove that $f_{ab}^{(B)}(s)$ converges to $f_{ab}(s)$ as $B$ goes to infinity.
To do so, let us consider a truncated birth/birth-death process $\mb{X}^{(B)}(t) = (X^{(B)}_1(t), X^{(B)}_2(t) )$ at truncation level $B$ such that it executes the same process as $\mb{X}(t)$ on the state $\{ a_0, a_0 + 1, a_0 + 2, \ldots \} \times \{ 0, 1, 2, \ldots, B\}$ except that $\lambda^{(2)}_{aB} = 0$.
Define $P_{ab}^{a_0 b_0, (B)}(t)$ be the transition probabilities of $\mb{X}^{(B)}(t)$ and $T_B$ be the hitting time at which $X_2(t)$ first reach state $B+1$.
For any set $S \subset \mathbb{N}^2$, we have
\begin{align*}
\Pr(\mb{X}(t) \in S) &= \Pr(\mb{X}(t) \in S ~|~ T_B > t) \Pr(T>t) + \Pr(\mb{X}(t) \in S ~|~ T_B \leq t) \Pr(T_B \leq t) \\
&= \Pr(\mb{X}^{(B)}(t) \in S) \Pr(T_B > t) + \Pr(\mb{X}(t) \in S ~|~ T_B \leq t) \Pr(T_B \leq t) \\
&= \Pr(\mb{X}^{(B)}(t) \in S) + [\Pr(\mb{X}(t) \in S ~|~ T_B \leq t) - \Pr(\mb{X}^{(B)}(t) \in S)] \Pr(T_B \leq t) 
\end{align*}
Therefore $|\Pr(\mb{X}(t) \in S) - \Pr(\mb{X}^{(B)}(t) \in S)| \leq \Pr(T_B \leq t)$.
Note that $f_{ab}^{(B)}(s)$ is the Laplace transform of $P_{ab}^{a_0 b_0, (B)}(t)$.
Hence
\[
|f_{ab}^{(B)}(s) - f_{ab}(s)| \leq \int_{0}^\infty{|P_{ab}^{a_0 b_0, (B)}(t) - P_{ab}^{a_0 b_0}(t)| e^{-st}dt} \leq \int_{0}^\infty{ Pr(T_B \leq t) e^{-st}dt}
\]
By Dominated convergence theorem and the fact that $\lim_{B \to \infty} Pr(T_B \leq t) = 0$, we deduce that $\lim_{B \to \infty} f_{ab}^{(B)}(s) = f_{ab}(s)$.


\section{Branching SIR approximation}
\label{ap:sir}
Here we derive and solve the Kolmogorov backward equations of the two-type branching process necessary for evaluating the probability generating functions (PGFs) whose coefficients yield transition probabilities. 

\subsection{Deriving the PGF}
Our two-type branching process is represented by a vector $(X_1(t), X_2(t))$ that denotes the numbers of particles of two types at time $t$. Let the quantities $a_1(k,l)$ denote the rates of producing $k$ type 1 particles and $l$ type 2 particles, starting with one type 1 particle, and $a_2(k,l)$ be analogously defined but beginning with one type 2 particle. 
Given a two-type branching process defined by instantaneous rates $a_i(k,l)$, denote the following \textit{pseudo-generating} functions for $i = 1,2$ as
\begin{equation} 
u_i(s_1,s_2) = \sum_k \sum_l a_i(k,l)s_1^k s_2^l .
\end{equation}

We may expand the probability generating functions in the following form:
\begin{align}
\phi_{10}(t, s_1, s_2) &= E (s_1^{X_1(t)} s_2^{X_2(t)} | X_1(0) = 1, X_2(0) = 0)
\\	&= \sum_{k=0}^\infty \sum_{l=0}^\infty P_{1,0}^{kl} (t) s_1^k s_2^l \nonumber
\\ 	&= \sum_{k=0}^\infty \sum_{l=0}^\infty ( \mathbf{1}_{k=1, l = 0} + a_1(k,l) t + o(t) ) \nonumber s_1^k s_2^l
\\ 	&= s_1 + u_1(s_1, s_2) t + o(t) . \nonumber
\end{align}
We have an analogous expression for $\phi_{01}(t, s_1, s_2)$ beginning with one particle of type 2 instead of type 1. For short, we will write $\phi_{10} := \phi_1, \phi_{01} := \phi_2$.
Thus, we have the following relation between the functions $\phi$ and $u$:
\begin{align}
\label{eq:relation}
\frac{d \phi_1}{dt} (t, s_1, s_2) |_{t=0} &= u_1(s_1, s_2) \text{ and} \\
\frac{d \phi_2}{dt} (t, s_1, s_2) |_{t=0} &= u_2(s_1, s_2) . \nonumber
\end{align}

To derive the backwards and forward equations, Chapman-Kolmogorov arguments yield the symmetric relations
\begin{align}
\phi_1(t+h, s_1, s_2) &= \phi_1(t, \phi_1(h, s_1, s_2), \phi_2(h, s_1, s_2))
\\ &= \phi_1(h, \phi_1(t, s_1, s_2), \phi_2(t, s_1, s_2)) . \nonumber
\end{align}
First, we derive the backward equations by expanding around $t$ and applying (\ref{eq:relation}):
\begin{align}
\phi_1(t+h, s_1, s_2) &= \phi_1(t, s_1, s_2) + \frac{d \phi_1}{dh}(t+h, s_1, s_2) | _{h=0} h + o(h) 
\\ &= \phi_1(t, s_1, s_2) + \frac{d \phi_1}{dh}(h, \phi_1(t, s_1, s_2), \phi_2(t, s_1, s_2) |_{h=0} h + o(h) \nonumber
\\ &= \phi_1(t,s_1,s_2) + u_1( \phi_1(t,s_1, s_2) , \phi_2(t, s_1, s_2) h + o(h) ) . \nonumber
\end{align}
Since an analogous argument applies for $\phi_2$, we arrive at the system
\begin{align}
\frac{d}{dt} \phi_1(t, s_1, s_2) &= u_1( \phi_1(t, s_1, s_2), \phi_2(t, s_1, s_2) ) \text{ and} \\
\frac{d}{dt} \phi_2(t, s_1, s_2) &= u_2( \phi_1(t, s_1, s_2), \phi_2(t, s_1, s_2) ) , \nonumber
\end{align}
with initial conditions $\phi_1(0, s_1, s_2) = s_1, \phi_2(0, s_1, s_2) = s_2$.

Recall in our SIR approximation, we use the initial population $X_2(0)$ as a constant that scales the instantaneous rates over any time interval $[t_0, t_1)$. The only nonzero rates specifying this proposed model, in the notation above, are
\begin{equation}\label{eq:SIRrates} a_1(0,1) = \beta X_2(0), \quad \quad  a_1(1,0) = -\beta X_2(0), \quad \quad a_2(0,1) = -\alpha, \quad \quad a_2(0,0) = \alpha. \end{equation}

For simplicity, call $X_2(0) := I_0$, the constant representing the infected population at the beginning of the time interval.
Thus, the corresponding pseudo-generating functions have a simple form:
\begin{align}
u_1(s_1, s_2) = \beta I_0 s_2 - \beta I_0 s_1 \text{ and} \\
u_2(s_1, s_2) = \alpha - \alpha s_2 = \alpha(1 - s_2) . \nonumber
\end{align}
Plugging into the backward equations, we obtain
\begin{align}
\frac{d}{dt} \phi_1(t,s_1,s_2) &= \beta I_0 \big( \phi_2(t, s_1, s_2) - \phi_1(t, s_1, s_2) \big) \text{ and} \\ 
\frac{d}{dt} \phi_2(t,s_1, s_2) &= \alpha - \alpha \phi_2(t,s_1,s_2). \nonumber
\end{align}
The $\phi_2$ differential equation corresponds to a pure death process and is immediately solvable; suppressing the arguments of $\phi_2$ for notational convenience, we obtain
\begin{align}
 \frac{d}{dt} \phi_2 &= \alpha - \alpha \phi_2
 \\ \frac{d}{dt} \phi_2( \frac{1}{1 - \phi_2}) &= \alpha \nonumber
 \\ \ln (1 - \phi_2) &= -\alpha t + C \nonumber
 \\ \phi_2 &= 1 - \exp(-\alpha t + C) . \nonumber
\end{align}
Plugging in $\phi_2(0, s_1, s_2) = s_2$, we obtain $C = \ln(1-s_2)$, and we arrive at 
\begin{equation}\label{phi2}
\phi_2(t, s_1, s_2) = 1 + (s_2 - 1)\exp(-\alpha t) 
\end{equation}

Substituting this solution into the first differential equation and applying the integrating factor method provides
\begin{align}
\phi_1 e^{\beta I_0 t} &= \int \beta I_0 e^{\beta I_0 t} (1 + \frac{s_2-1}{e^{\alpha t}}) \, dt = e^{\beta I_0 t} + \beta I_0 (s_2 -1) \int e^{(\beta I_0 - \alpha )t} \, dt  \\
 &= e^{\beta I_0 t} + \beta I_0 (s_2-1) \frac{e^{(\beta I_0 - \alpha) t}}{\beta I_0 - \alpha} + C . \nonumber
\end{align}
Plugging in the initial condition $\phi_1(0,s_1,s_2) = s_1$ and rearranging yields
\begin{equation}
\phi_1 = 1 + \frac{ \beta I_0 (s_2 - 1)}{\beta I_0 - \alpha} e^{-\alpha t} + e^{-\beta I_0 t} ( s_1 - 1 - \frac{\beta I_0 (s_2 - 1) }{\beta I_0 - \alpha} ) .
\end{equation}

\subsection{Transition probability expressions}
Transition probabilities are related to the PGF via repeated partial differentiation; note that 
\begin{align}
P_{kl}^{mn}(t) &= \frac{1}{k!}\frac{1}{l!}\frac{\partial^k}{\partial s_1^k} \frac{\partial^l}{\partial s_2^l} \phi_{mn}(t, s_1, s_2) \bigg|_{s_1=s_2=0} \\
&= \frac{1}{k!}\frac{1}{l!}\frac{\partial^k}{\partial s_1^k} \frac{\partial^l}{\partial s_2^l} \phi_1^m(t, s_1, s_2) \phi_2^n(t, s_1, s_2) \bigg|_{s_1=s_2=0} \nonumber \\
&=  \frac{\partial^l}{\partial s_2^l} \sum_{i=0}^k {k \choose i} \frac{ \partial^{k-i}}{\partial s_1^ {k-i} } \phi_1^m(t, s_1, s_2) \frac{ \partial^i}{\partial s_1^i} \phi_2^n(t, s_1, s_2)  \bigg|_{s_1=s_2=0} . \nonumber
\end{align}
This expression is generally unwieldy, but notice  $ \frac{ \partial^i}{\partial s_1^i} \phi_2^n(t, s_1, s_2)  \bigg|_{s_1 = 0} = 0 \text{ for all } i > 0$ in our model. Remarkably, this allows us to further simplify and ultimately arrive at closed-form expressions. Continuing, we see
\begin{align}
P_{kl}^{mn}(t) &= \frac{\partial^l}{\partial s_2^l}  \left[ { k \choose 0 } \phi_2^n(t, s_1, s_2)  \frac{ \partial^k}{\partial s_1^k} \phi_1^m(t, s_1, s_2) \right] \bigg|_{s_1=s_2=0}  \\
&=  \frac{\partial^l}{\partial s_2^l} \bigg\{ \phi_2^n(t, s_1, s_2)  \cdot \frac{m!}{(m-k)!} e^{-k \beta I_0 t} \bigg[ 1 + \frac{ \beta I_0 (s_2 -1)}{\beta I_0 - \alpha } e^{-\alpha t} \nonumber \\ & \hspace{2.7in} - e^{-\beta I_0 t} \bigg( 1 + \frac{ \beta I_0 (s_2 - 1)}{ \beta I_0 - \alpha } \bigg) \bigg] ^{m-k} \bigg\} \bigg|_{s_1=s_2=0}  \nonumber \\
&:=  \frac{\partial^l}{\partial s_2^l}  \left[ \phi_2^n(t, s_1, s_2) \cdot  h(t, s_1, s_2) \right] \bigg|_{s_1=s_2=0} \nonumber \\
&= \sum_{i=0}^l {l \choose i} \frac{ \partial^{l-i}}{\partial s_2^ {l-i}} h(t, s_1, s_2)  \frac{\partial^i}{\partial s_2^i}  \phi_2^n(t, s_1, s_2) \nonumber \\
&:= \sum_{i=0}^l {l \choose i} A(l-i) B(i) . \nonumber
\end{align} 
From here, it is straightforward to take partial derivatives of $h(t, s_1, s_2)$ and our closed-form expression of $\phi_2^n(t, s_1, s_2)$ to arrive at Conditions (\ref{eq:conditionB}) and (\ref{eq:conditionA}).
A heatmap visualization of the difference between transition probabilities under the branching approximation and those computed using the continued fraction method for the SIR model is included below.
\begin{figure}[H]
\centering
    \includegraphics[scale=.7]{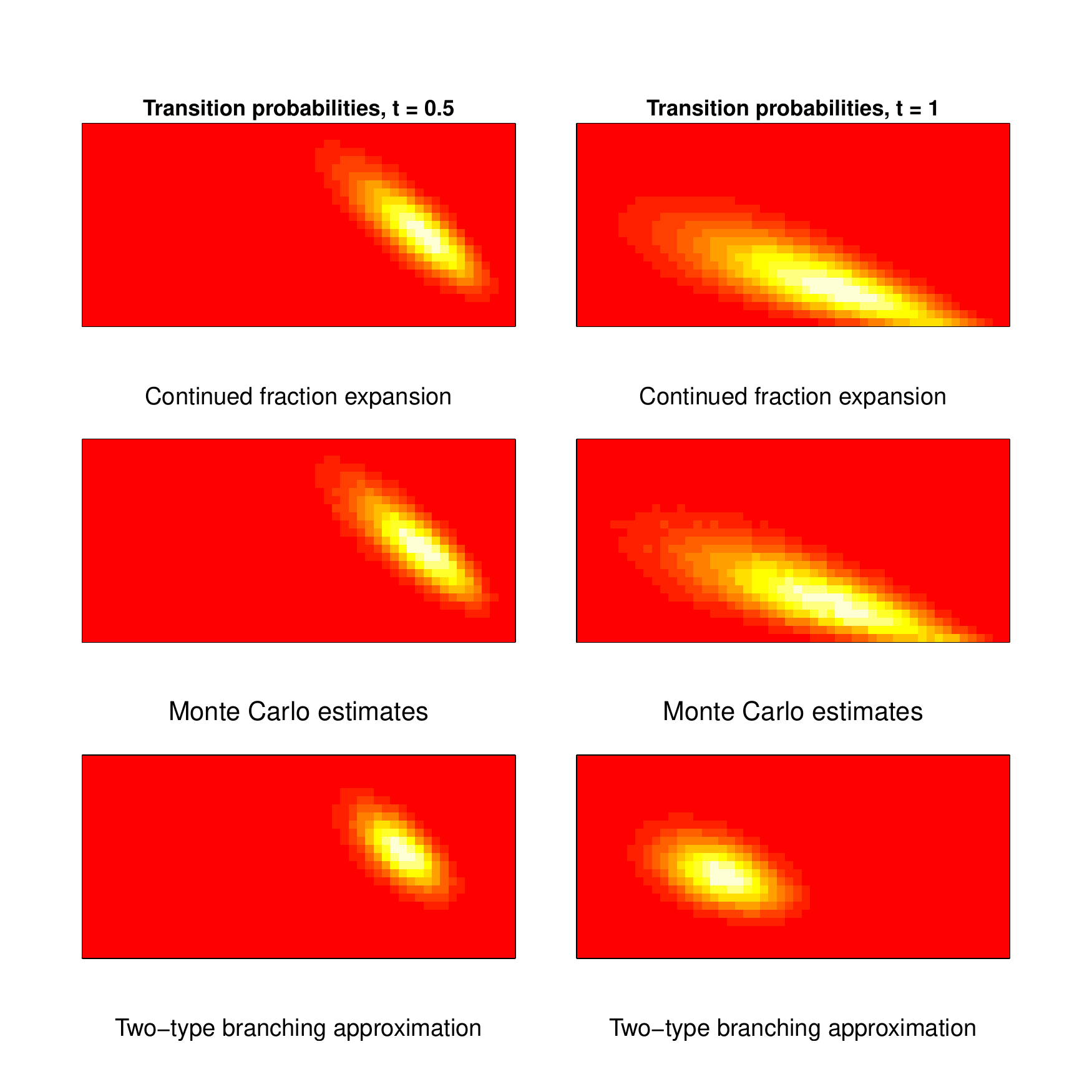} 
\caption{ Heatmap visualizations of transition probabilities near the region of support across methods for $t=0.5, 1$. We see that the branching approximation is noticeably different from the Monte Carlo ground truth when we increase $t$ to $1$, while the continued fraction approach remains accurate. }    
\label{fig:heatmap}
\end{figure}


\bibliographystyle{chicago}
\bibliography{ms}
\end{document}